\newcommand{\high}[1]{{\color{black}{#1}}}
\newcommand{\HRule}{\rule{\linewidth}{0.3mm}}
\begin{document}

\title{
Delay-Based Back-Pressure Scheduling in Multihop Wireless Networks}


\author{Bo~Ji,~\IEEEmembership{Student~Member,~IEEE,} 
Changhee~Joo,~\IEEEmembership{Member,~IEEE,}
and Ness~B.~Shroff,~\IEEEmembership{Fellow,~IEEE}%
\thanks{B. Ji is with Department of ECE at the 
Ohio State University. 
C. Joo is with School of ECE at UNIST, Korea.
N. B. Shroff is with Departments of ECE 
and CSE at the Ohio State University.}%
\thanks{Emails: ji@ece.osu.edu, cjoo@unist.ac.kr, shroff@ece.osu.edu.}
\thanks{This work was supported in part by ARO MURI Award W911NF-08-1-0238, 
and NSF Awards 1012700-CNS, 0721236-CNS, 0721434-CNS, and 1065136-CNS, and in part by 
the Basic Science Research Program through the National Research Foundation 
of Korea (NRF), funded by the Ministry of Education, Science, and Technology 
(No. 2012-0003227).}
\thanks{A preliminary version of this work was presented at IEEE INFOCOM 2011.}
}%

\newcounter{theorem}
\newtheorem{theorem}{Theorem}
\newtheorem{proposition}[theorem]{\it Proposition}
\newtheorem{lemma}[theorem]{\it Lemma}
\newtheorem{corollary}[theorem]{\it Corollary}
\newcounter{definition}
\newtheorem{definition}{\it Definition}

\newcommand{\Graph}{\mathcal{G}}
\newcommand{\Node}{\mathcal{V}}
\newcommand{\Vertex}{\mathcal{V}}
\newcommand{\Edge}{\mathcal{E}}
\newcommand{\Link}{\mathcal{L}}
\newcommand{\Flow}{\mathcal{S}}
\newcommand{\Route}{\mathcal{H}}
\newcommand{\Hop}{{H}}
\newcommand{\Matching}{\mathcal{M}}
\newcommand{\Pair}{\mathcal{P}}
\newcommand{\Int}{\mathbb{Z}}
\newcommand{\Expect}{\mathbf{E}}
\newcommand{\System}{\mathcal{Y}}
\newcommand{\Markov}{\mathcal{X}}
\newcommand{\vM}{\vec{M}}
\newcommand{\bM}{\mathbf{M}}
\newcommand{\s}{{(s)}}
\newcommand{\sk}{{s,k}}
\newcommand{\hsk}{{\hat{s},\hat{k}}}
\newcommand{\rj}{{r,j}}
\newcommand{\xn}{{x_n}}
\newcommand{\xnj}{{x_{n_j}}}
\newcommand{\UGraph}{{U}}
\newcommand{\UVertex}{{X}}
\newcommand{\UEdge}{{Y}}
\newcommand{\vlambda}{\vec{\lambda}}
\newcommand{\vpi}{\vec{\pi}}
\newcommand{\vphi}{\vec{\phi}}
\newcommand{\vpsi}{\vec{\psi}}
\newcommand{\blambda}{\mathbf{\lambda}}
\newcommand{\tM}{\tilde{M}}
\newcommand{\tpi}{\tilde{\pi}}
\newcommand{\tphi}{\tilde{\phi}}
\newcommand{\tpsi}{\tilde{\psi}}
\newcommand{\vmu}{\vec{\mu}}
\newcommand{\vnu}{\vec{\nu}}
\newcommand{\valpha}{\vec{\alpha}}
\newcommand{\vbeta}{\vec{\beta}}
\newcommand{\ve}{\vec{e}}
\newcommand{\vxi}{\vec{\xi}}
\newcommand{\hgamma}{\gamma}
\newcommand{\GMSLambda}{\Lambda_{\text{\it GMS}}}
\newcommand{\argmax}{\operatornamewithlimits{argmax}}
\newcommand{\card}{\aleph}
\newcommand{\V}{\mathcal{V}}
\newcommand{\X}{\mathcal{X}}
\newcommand{\Y}{\mathcal{Y}}

\maketitle

\begin{abstract}
Scheduling is a critical and challenging resource allocation 
mechanism for multihop wireless networks. It is well known 
that scheduling schemes that favor links with larger queue 
length can achieve high throughput performance. However, 
these queue-length-based schemes could potentially suffer 
from large (even infinite) packet delays due to the well-known 
\emph{last packet problem}, whereby packets belonging to some
flows may be excessively delayed due to lack of subsequent 
packet arrivals. Delay-based schemes have the potential to 
resolve this last packet problem by scheduling the link based 
on the delay the packet has encountered. However, characterizing
throughput-optimality of these delay-based schemes has largely 
been an open problem in multihop wireless networks (except in 
limited cases where the traffic is single-hop.) In this paper, 
we investigate delay-based scheduling schemes for multihop 
traffic scenarios \high{with fixed routes. We develop a scheduling 
scheme based on a new delay metric, and} show that the proposed 
scheme achieves optimal throughput performance. Further, we 
conduct simulations to support our analytical results, and 
show that the delay-based scheduler successfully removes 
excessive packet delays, while it achieves the same throughput 
region as the queue-length-based scheme.

\end{abstract}

\section{Introduction} \label{sec:intro}
Link scheduling is a critical resource allocation component in multihop 
wireless networks, and also perhaps the most challenging. \high{The seminal work 
of \cite{tassiulas92} introduces a joint adaptive routing and scheduling 
algorithm, called Queue-length-based Back-Pressure (Q-BP), that has been 
shown to be throughput-optimal, i.e., it can stabilize the network under 
any feasible load. This paper focuses on the settings with fixed routes, 
where the Q-BP algorithm becomes a scheduling algorithm.}
Since the development of Q-BP, there have been numerous extensions
that have integrated it in an overall optimal cross-layer framework.
Further, easier-to-implement queue-length-based scheduling schemes
have been developed and shown to be throughput-efficient (see \cite{lin06c}
and references therein). Some recent attempts \cite{warrier09, sridharan09,
moeller10} focus on designing real-world wireless protocols using the 
ideas behind these algorithms.

While these queue-length-based schedulers have been shown to achieve
excellent throughput performance, they are usually evaluated under
the assumption that flows have an infinite amount of data and keep
injecting packets into the network. However, in practice, when accounting
for multiple time scales \cite{van09, liu10a, liu10b}, there also 
exist other types of flows that have a finite number of packets to 
transmit, which can result in the well-known \emph{last packet problem}: 
consider a queue that holds the last packet of a flow, then the packet 
does not see any subsequent packet arrivals, and thus the queue length 
remains very small and the link may be starved for a long time, since 
the queue-length-based schemes give a higher priority to links with a 
larger queue length. In such a scenario with flow-level dynamics, it 
has also been shown in \cite{van09} that the queue-length-based schemes 
may not even be throughput-optimal.

Recent works in \cite{mekkittikul96, andrews01, shakkottai02, andrews04, 
sadiq09, neely10} have studied the performance of delay-based scheduling 
algorithms that use Head-of-Line (HOL) delays instead of queue lengths 
as link weights.
One desirable property of the delay-based approach is that they provide 
an intuitive way around the last packet problem. The schedulers give a 
higher priority to the links with a larger weight as before, but now the 
weight (i.e., the HOL delay) of a link increases with time until the link 
is scheduled. Hence, if the link with the last packet is not scheduled 
at this moment, it is more likely to be scheduled in the next time.
However, the throughput of the delay-based scheduling schemes is not 
fully understood, and has only been established for limited cases 
with single-hop traffic.

The delay-based approach was introduced in \cite{mekkittikul96} for 
scheduling in Input-Queued switches. The results have been extended to
wireless networks for single-hop traffic, providing throughput-optimal 
delay-based MaxWeight scheduling algorithms \cite{shakkottai02,andrews04,
eryilmaz05}.
It has also been shown that delay-based schemes with appropriately chosen 
weight parameters provide good Quality of Service (QoS) \cite{andrews01},
and can be used as an important component in a cross-layer protocol design 
\cite{neely10}.
The performance of the delay-based MaxWeight scheduler has been further 
investigated in a single-hop network with flow-level dynamics \cite{sadiq09}. 
The results show that, when flows arrive at the base station carrying a 
finite amount of data, the delay-based MaxWeight scheduler achieves
optimal throughput performance while its queue-length-based counterpart 
does not.

\high{It should be noted that even for the multihop wireless networks with 
fixed routes, the scheduling problem is both important and challenging. There 
are many existing works focusing on such scenarios with fixed routes (see 
\cite{liuj09,bui11,gupta11} for examples).} \emph{However, in multihop 
wireless networks, the throughput performance of these delay-based schemes 
has largely been an open problem.} To the best of our knowledge, \high{even 
with the assumption of fixed routes,} there are no prior works that employ 
delay-based algorithms to address the important issue of throughput-optimal 
scheduling in multihop wireless networks. Indeed, the problem becomes much 
more challenging in the multihop scenario. In \cite{andrews04}, the key 
idea in showing throughput-optimality of the delay-based MaxWeight scheduler 
is to exploit the following property: after a finite time, there exists 
a linear relation between queue lengths and HOL delays in the fluid 
limits (which we formally define in Section~\ref{subsec:fluid}), where 
the ratio is the mean arrival rate. Hence, the delay-based MaxWeight 
scheme is basically equivalent to its queue-length-based counterpart, and 
thus achieves the optimal throughput. This property holds for the single-hop 
traffic. Since given that the exogenous arrival processes follow the Strong 
Law of Large Numbers (SLLN) and the fluid limits exist, the arrival processes 
are deterministic with constant rates in the fluid limits. \emph{However, 
such a linear relation does not necessarily hold for the multihop traffic, 
since at a non-source (or relay) node, the arrival process may not satisfy 
SLLN and the packet arrival rate may not even be a constant, depending on 
the underlying scheduler's dynamics. }
To this end, we investigate delay-based scheduling schemes that achieve 
optimal throughput performance in multihop wireless networks.

Unlike previous delay-based schemes, we view the packet delay as a sojourn 
time in the network, and re-design the delay metric of the queue as the 
sojourn-time difference between the queue's HOL packet and the HOL packet 
of its previous hop (see Eq.~(\ref{eq:linkd}) for the formal definition). 
Using this new metric, we can establish a linear relation between queue 
lengths and delays in the fluid limits. The linear relation then plays 
the key role in showing that the proposed Delay-based Back-Pressure (D-BP) 
scheduling scheme is throughput-optimal in multihop networks.

In summary, the main contributions of our paper are as follows:
\begin{itemize}

\high{

\item We devise a new delay metric for multihop wireless networks 
and develop the D-BP algorithm, under which a linear relation between 
queue lengths and delays in the fluid limits can be established. From 
this linear relation, we can show that D-BP achieves optimal throughput 
performance. To do this, we first re-visit throughput-optimality of 
Q-BP using fluid limit techniques. Further, we develop a simpler greedy 
approximation of D-BP for practical implementation.

%
%
\item We provide extensive simulation results to evaluate the 
performance of the delay-based schedulers, including D-BP. 
Through simulations, i) we observe that the last packet problem 
can cause excessive delays for certain flows under Q-BP, while 
the problem is eliminated under D-BP. ii) We show that D-BP 
also achieves better fairness and prevents the flows that lack 
subsequent packet arrivals from starving. iii) Finally, we 
simulate the simpler greedy approximation algorithms of Q-BP and 
D-BP, and show that the delay-based approximation empirically 
achieves a throughput region that is no smaller than that of 
its queue-based counterpart. 
}
\end{itemize}

The paper is organized as follows. In Section~\ref{sec:model}, 
we present a detailed description of our system model. In 
Section~\ref{sec:qbp}, we show throughput-optimality of Q-BP 
using fluid limit techniques, and extend the analysis to D-BP 
in Section~\ref{sec:dbp}. The discussions are further extended 
to the greedy algorithms in Section~\ref{sec:greedy}. We evaluate 
the performance of delay-based schedulers through simulations 
in Section~\ref{sec:simulation}, and conclude our paper in 
Section~\ref{sec:conclusion}. 

\section{System Model} \label{sec:model}
We consider a multihop wireless network described by a directed graph 
$\Graph=(\Vertex,\Edge)$, where $\Vertex$ denotes the set of nodes and 
$\Edge$ denotes the set of links. Nodes are wireless transmitters/receivers 
and links are wireless channels between two nodes if they can directly 
communicate with each other. During a single time slot, multiple links 
that do not interfere with each other can be active at the same time, 
and each active link transmits one packet during the time slot if its 
queue is not empty. Let $\Flow$ denote the set of flows in the network. 
We assume that each flow has a single, fixed, and loop-free route. The 
route of flow $s$ has an $\Hop(s)$-hop length from the source to the 
destination, where each $k$-th hop link is denoted by $(\sk)$. Let
$\Hop^{\max} \triangleq \max_{s \in \Flow} \Hop(s) < \infty$ denote the 
length of the longest route over all flows. Note that the assumption 
of single route and unit link capacity is only for ease of exposition, 
and one can readily extend the results to more general scenarios with 
\emph{multiple fixed routes and heterogeneous link rates}, applying 
the techniques used in this paper. To specify wireless interference, 
we consider the $k$-th hop of each flow $s$ or link-flow-pair $(\sk)$. 
Let $\Pair$ denote the set of all link-flow-pairs, i.e.,
\[
\Pair \triangleq \{(\sk) ~|~ s \in \Flow,~ 1 \le k \le \Hop(s)\}.
\]  
The set of link-flow-pairs that interfere with $(\sk)$ can be described as
\begin{equation}
\label{eq:interference}
\begin{split}
I(\sk) \triangleq \{ (\rj) \in \Pair ~|~ &(\rj) ~\text{interferes with}~ (\sk), \\
& \text{or}~ (\rj) = (\sk) \}. 
\end{split}
\end{equation}
Note that the interference model we adopt is very general, and includes the 
class of the \emph{$K$-hop} interference model\footnote{Under the $K$-hop 
interference model, two links within a $K$-hop ``distance" interfere with 
each other and cannot be activated at the same time \cite{sharma06}. When 
$K=1$, it is also called the \emph{primary} or \emph{node-exclusive} 
interference model. The 1-hop interference model has been known as a good 
representation for Bluetooth or FH-CDMA networks \cite{hajek88, joo09, 
joo09c, lin06}. When $K=2$, it is often used to model the ubiquitous IEEE 
802.11 DCF (Distributed Coordination Function) wireless networks 
\cite{chaporkar08, wu07, leconte09, joo09c}.}. A schedule is a set of 
(active or inactive) link-flow-pairs, and can be represented by a vector 
$\vM \in \{0,1\}^{|\Pair|}$, where $|\cdot|$ denotes the cardinality of 
a set. Each element $M_{\sk}$ is set to 1 if link-flow-pair $(\sk)$ is 
active, and 0 if link-flow-pair $(\sk)$ is inactive. Slightly abusing 
the notation, we also use $M$ to denote the set of active link-flow-pairs 
of $\vM$, i.e., $M\triangleq\{(\sk) \in \Pair ~|~ M_\sk=1 \}$. A schedule 
$\vM$ is said to be \emph{feasible} if no two link-flow-pairs of $\vM$ 
interfere with each other, i.e., $(\rj) \notin I(\sk)$ for all $(\rj)$, 
$(\sk)$ with $M_\rj = 1$ and $M_\sk = 1$. Let $\Matching_{\Pair}$ denote 
the set of all feasible schedules in $\Pair$, and let $Co(\Matching_{\Pair})$ 
denote its convex hull. 

Let $A_s(t)$ denote the number of packet arrivals at the source node of flow 
$s$ at time slot $t$.  \high{We assume that packets are of unit length. Similar 
to \cite{andrews04}, we assume that each arrival process $A_s(t)$ is a stationary 
and ergodic Markov chain with countable state space, and satisfies} the Strong 
Law of Large Numbers (SLLN): That is, with probability one,
\begin{equation}
\label{eq:slln}
\textstyle \lim_{t \rightarrow \infty} \frac {\sum^{t-1}_{\tau=0} A_s(\tau)} {t} = \lambda_s,
\end{equation}
for each flow $s \in \Flow$, where $\lambda_s$ denotes the mean arrival rate 
of flow $s$. We let $\vlambda \triangleq [\lambda_1,\lambda_2, \cdots, 
\lambda_{|\Flow|}]$ denote the arrival rate vector. 

Let $Q_\sk(t)$ denote the number of packets at the queue of $(\sk)$ at 
the beginning of time slot $t$. For notational ease, we also use $Q_\sk$ 
to denote the queue itself. We let $\vec{Q}(t)\triangleq [Q_\sk(t),~(\sk) 
\in \Pair]$ denote the queue length vector at time slot $t$, and use 
$\|\cdot\|$ to denote the $L_1$-norm of a vector, e.g., $\|\vec{Q}(t)\|=
\sum_{(\sk)\in\Pair} Q_\sk(t)$. Let $\Pi_\sk(t)$ denote the service of 
$Q_\sk$ at time slot $t$, which takes a value of either 1 if link-flow-pair 
$(\sk)$ is active, or 0 otherwise, in our settings. We let $\Psi_\sk(t)$
denote the actual number of packets transmitted from $Q_\sk$ at time slot $t$. 
Clearly, we have $\Psi_\sk(t) \le \Pi_\sk(t)$ for all time slots $t\ge0$. 
Let $P_\sk(t) \triangleq \sum_{i=1}^{k} Q_{s,i}(t)$ denote the cumulative
queue lengths up to the $k$-th hop for flow $s$. By convention, we set 
$Q_{s,\Hop(s)+1}(t) = 0$, and then we have $P_{s,\Hop(s)+1}(t) = P_{s,\Hop(s)}(t)$. 
The queue length evolves according to the following equations:
\begin{equation} 
\label{eq:q_evolution}
Q_\sk(t+1) = Q_\sk(t) + \Psi_{s,k-1}(t) - \Psi_\sk(t),
\end{equation}
where we set $\Psi_{s,0}(t) = A_s(t)$.

Let $F_s(t)$ be the total number of packets that arrive at the source node 
of flow $s$ until time slot $t \ge 0$, including those present at time slot 
0, and let $\hat{F}_{\sk}(t)$ be the total number of packets that are served 
at $Q_{\sk}$ until time slot $t \ge 0$. By convention, we set $\hat{F}_{\sk}(0) 
= 0$ for all link-flow-pairs $(\sk) \in \Pair$. We let $Z_{\sk,i}(t)$ denote 
the sojourn time of the $i$-th packet of $Q_{\sk}$ in the network at time slot 
$t$, where the time is measured from the time when the packet arrives in the 
network (i.e., when the packet arrives at the source node), and let 
$W_{\sk}(t)=Z_{\sk,1}(t)$ denote the sojourn time of the HOL packet of $Q_{\sk}$ 
in the network at time slot $t$. We set $W_{s,0}(t) = 0$ for all $s \in \Flow$. 
Further, if $Q_\sk(t) = 0$, we set $W_{\sk}(t) = W_{s,k-1}(t)$. Letting 
$U_\sk(t) \triangleq t - W_\sk(t)$ denote the time when the HOL packet of $Q_\sk$ 
arrives in the network, we have that 
\begin{equation}
\label{eq:u}
U_{\sk}(t) = \inf \{ \tau \le t ~|~ F_s(\tau) > \hat{F}_\sk(t) \},
~\text{for all}~ t \ge 0.
\end{equation} 

As in \cite{bramson08}, a discrete-time queueing system is said to be 
\emph{stable}, if the underlying Markov chain is \emph{positive Harris 
recurrent}. When the state space is countable and all states communicate
(as in the system that we consider in this paper), this is equivalent to 
the Markov chain being \emph{positive recurrent}. The \emph{throughput 
region} of a scheduling policy is defined as the set of arrival rate 
vectors for which the network remains stable under this policy. Further, 
the \emph{optimal throughput region} (or \emph{stability region}) is 
defined as the union of the throughput regions of all possible scheduling 
policies. We let $\Lambda^{*}$ denote the optimal throughput region, 
which can be represented as
\begin{equation}
\label{eq:otr}
\Lambda^{*} \triangleq \{ \vlambda ~|~ \exists  \vphi \in Co(\Matching_{\Pair}) 
~\text{s.t.}~ \lambda_s \le \phi_\sk, \forall (\sk) \in \Pair \}.
\end{equation}
An arrival rate vector is strictly inside $\Lambda^{*}$, if the inequalities 
above are all strict.

We summarize the notations in Appendix~\ref{app:notation} for quick reference.

\section{Queue-length-based Back-Pressure Algorithm} \label{sec:qbp}
It has been shown in \cite{tassiulas92} that Q-BP stabilizes the network for any 
feasible arrival rate vector using stochastic Lyapunov techniques. Specifically, 
we can use a quadratic Lyapunov function to show that the function has a negative 
drift under Q-BP when queue lengths are large enough. In this section, we re-visit 
throughput-optimality of Q-BP using fluid limit techniques. The analysis will be 
extended later to prove throughput-optimality of the delay-based back-pressure 
algorithm.

To begin with, we define the \emph{queue differential} $\Delta Q_\sk(t)$ as
\begin{equation} 
\label{eq:qd}
\Delta Q_\sk(t) \triangleq Q_\sk(t) - Q_{s,k+1}(t),
\end{equation}
and specify the back-pressure algorithm based on queue lengths as follows. 

\noindent {\bf Queue-length-based Back-Pressure (Q-BP) algorithm:}
\begin{equation}
\label{eq:qbp}
\textstyle \vec{M}^* \in \argmax_{\vec{M} \in \Matching_\Pair} \sum_{(\sk) \in \Pair} 
\Delta Q_\sk(t) \cdot M_\sk.
\end{equation}
The algorithm needs to solve a MaxWeight problem with weights as queue differentials,
and ties can be broken arbitrarily if there is more than one schedule that has the 
largest weight sum. 

We establish the fluid limits of the system in the following subsection.

\subsection{Fluid Limits} \label{subsec:fluid}
We define the process describing the behavior of the underlying system as
$\Markov=(\Markov(t), t=0,1,2,\cdots)$, where
\[
\Markov(t) \triangleq \left( (Z_{\sk,1}(t), \cdots, Z_{\sk,Q_{\sk}(t)}(t)), 
(\sk) \in \Pair \right).
\]
We define the norm of $\Markov(t)$ as
\begin{equation}
\| \Markov(t) \| \triangleq \| \vec{Q}(t) \| + \| \vec{W}(t) \|.
\end{equation}
Clearly, under Q-BP, the evolution of $\Markov$ forms a discrete-time 
Markov chain with countable state space. Let $\Markov^{(x)}$ denote a 
process $\Markov$ with an initial configuration such that 
\begin{equation}
\label{eq:init_config}
\| \Markov^{(x)}(0) \| = x. 
\end{equation}

The following Lemma was derived in \cite{rybko92} for continuous-time 
countable Markov chains, and it follows from more general results in 
\cite{malyshev79} for discrete-time countable Markov chains. 

\high{
\begin{lemma}[Theorem~4 of \cite{andrews04}]
\label{lem:stab_cri}
Suppose there exist an $\epsilon>0$ and a finite integer $T > 0$ such 
that for any sequence of processes $\{\frac {1} {x} \Markov^{(x)}(x T), 
x=1,2,\cdots\}$, we have
\begin{equation}
\label{eq:stab_cri}
\textstyle \limsup_{x \rightarrow \infty} \Expect \left[\frac {1} {x} 
\| \Markov^{(x)} (x T) \| \right] \le 1 - \epsilon.
\end{equation}
Then, the Markov process $\Markov$ is positive recurrent.
\end{lemma}
}

A stability criteria of (\ref{eq:stab_cri}) leads to a fluid limit 
approach \cite{dai95,stolyar95} to the stability problem of queueing 
systems. Hence, we start our analysis by establishing the \emph{fluid 
limit model} as in~\cite{dai95, andrews04}. We define the process 
$\System \triangleq \left(A,F,\hat{F},Q,P,\Pi,\Psi,W,U \right)$, and 
it is clear that a sample path of $\System^{(x)}$ uniquely defines the 
sample path of $\Markov^{(x)}$. Then we extend the definition of $Y=A, 
F, \hat{F}, Q, P, \Pi, \Psi, W$ and $U$ to continuous time domain as 
$Y(t) \triangleq Y(\lfloor t \rfloor)$ for each continuous time $t \ge 0$. 

As in \cite{andrews04}, we extend the definition of $F^{(x)}_s(t)$ to 
the negative interval $t \in [-x,0)$ by assuming that the packets present 
in the initial state $\Markov^{(x)}(0)$ arrived in the past at some of 
the time instants $-(x - 1), -(x - 2), \cdots, 0$, according to their 
delays in the state $\Markov^{(x)}(0)$. By this convention, we have 
$F^{(x)}_s(-x)=0$ for all $s\in\Flow$ and $x$, and $\sum_{s \in S} 
F^{(x)}_s(0) \le x$ for all $x$. 

Then, applying the techniques used in the proof for Theorem~4.1 of \cite{dai95} 
or Lemma~1 of \cite{andrews04}, we can show that with probability one, for any 
sequence of processes $\{\frac {1} {\xn} \System^{(\xn)}(\xn \cdot)\}$, 
where $\{\xn\}$ is a sequence of positive integers with $\xn \rightarrow 
\infty$, there exists a subsequence $\{\xnj\}$ with $\xnj \rightarrow 
\infty$ as $j \rightarrow \infty$ such that the following convergences 
hold \emph{uniformly over compact (u.o.c.)} intervals:
\begin{eqnarray}
&&	\textstyle\frac{1}{\xnj}\int_0^{\xnj t} A^{(\xnj)}_s(\tau)d\tau \rightarrow  \lambda_s t, \label{eq:fluid_a}\\
&&	\textstyle\frac{1}{\xnj}F^{(\xnj)}_s(\xnj t) \rightarrow  f_s(t), \\
&&	\textstyle\frac{1}{\xnj}\hat{F}^{(\xnj)}_\sk(\xnj t) \rightarrow  \hat{f}_\sk(t), \\
&&	\textstyle\frac{1}{\xnj}Q^{(\xnj)}_\sk(\xnj t) \rightarrow  q_\sk(t), \label{eq:fluid_q} \\
&&	\textstyle\frac{1}{\xnj}P^{(\xnj)}_\sk(\xnj t) \rightarrow  p_\sk(t), \\
&&	\textstyle\frac{1}{\xnj}\int_0^{\xnj t} \Pi^{(\xnj)}_\sk(\tau)d\tau \rightarrow  \int_0^t \pi_\sk(\tau) d\tau,\\
&&	\textstyle\frac{1}{\xnj}\int_0^{\xnj t} \Psi^{(\xnj)}_\sk(\tau)d\tau \rightarrow  \int_0^t \psi_\sk(\tau) d\tau. \label{eq:fluid_psi}
\end{eqnarray}
Similarly, the following convergences (which are denoted by ``$\Rightarrow$") 
hold at every continuous point of the limit function:
\begin{eqnarray}
&&	\textstyle\frac{1}{\xnj}W^{(\xnj)}_\sk(\xnj t) \Rightarrow  w_{\sk}(t), \label{eq:fluid_w} \\
&&	\textstyle\frac{1}{\xnj}U^{(\xnj)}_\sk(\xnj t) \Rightarrow  u_{\sk}(t). \label{eq:fluid_u}
\end{eqnarray}
\high{The above convergence properties follow directly from the Arzela-Ascoli 
Theorem and the structure of the model: that the arrival process satisfies the 
SLLN and that the sequence of the (scaled) departure process is uniformly bounded 
and uniformly equicontinuous.} 

Any set of limiting functions $(f, \hat{f}, q, p, \pi, \psi, w, u)$
is called a \emph{fluid limit}. \high{The family of these fluid limits 
is associated with our original stochastic network. The scaled sequences 
$\{\frac {1} {\xn} \System^{(\xn)}(\xn \cdot)\}$ and their limits are 
referred to as a \emph{fluid limit model} \cite{bramson08}.}
Since some of the limiting functions, namely $f_s, \hat{f}_\sk, q_\sk, 
p_\sk$ are Lipschitz continuous in $[0,\infty)$, they are absolutely 
continuous. Hence, at almost all points $t \in [0, \infty)$, the 
derivatives of these limiting functions exist. We call such points 
{\em regular} time. 

We then present the \emph{fluid model equations} of the system as follows. 
\begin{align}
& \textstyle \sum_{s \in \Flow} f_s(0) \le 1, \label{eq:initial} \\
& \textstyle p_\sk(t) = \sum_{i=1}^{k} q_{s,i}(t), \label{eq:pq} \\
& \textstyle p_\sk(t) = f_s(t) - \hat{f}_\sk(t), \label{eq:pf} \\
& \textstyle f_s(t) = f_s(0) + \lambda_s t, \label{eq:ff} \\
& \textstyle u_{\sk}(t) = t - w_{\sk}(t), \label{eq:uw} \\
& \textstyle \psi_\sk(t) \le \pi_\sk(t), \label{eq:pp} \\
& \textstyle \Delta q_\sk(t) = q_\sk(t) - q_{s,k+1}(t), \label{eq:qdfluid} \\
& \textstyle \frac{d}{dt} q_\sk(t) = \left\{
\begin{array}{ll}
\psi_{s,k-1}(t) - \pi_\sk(t), & \text{if}~ q_\sk(t) > 0, \\
(\psi_{s,k-1}(t) - \pi_\sk(t))^+, & \text{otherwise},
\end{array}
\right. \label{eq:dq}
\end{align}
where $(z)^+ \triangleq \max(z,0)$, and we set $\psi_{s,0}=\pi_{s,0}=\lambda_s$.
\high{Fluid model equations can be thought of as belonging to a fluid network 
which is the deterministic equivalence of the original stochastic network. Any 
set of functions satisfying the fluid model equations is called a \emph{fluid 
model solution} of the system. It is easy to check that any fluid limit is a 
fluid model solution.}

It is clear from (\ref{eq:qbp}) that Q-BP will not schedule link-flow-pair 
$(\sk)$ if $Q_\sk(t) - Q_{s,k+1}(t) < 0$. \high{Hence, if link-flow-pair $(\sk)$ 
is scheduled, it must satisfy that $Q_\sk(t) - Q_{s,k+1}(t) \ge 0$. Moreover,  
the length of queue $Q_\sk$ can decrease by at most one within one time slot, 
and the length of queue $Q_{s,k+1}$ can increase by at most one within one 
time slot, due to the assumption of unit link capacity (a similar argument 
also holds with non-unit link rates). This implies that, if 
\begin{equation}
\label{eq:qdcon}
Q_\sk(t) \ge Q_{s,k+1}(t) - 2
\end{equation}
initially holds for all $(\sk)$ at time slot 0, then the inequality holds for 
every time slot $t\ge0$. This further implies that 
\begin{equation}
\label{eq:dqg}\
q_\sk(t) \ge q_{s,k+1}(t), ~\text{i.e.},~\Delta q_\sk(t) \ge 0,
\end{equation}
for all (scaled) time $t \ge 0$, from the convergence of (\ref{eq:fluid_q}).} 
We assume that at time slot 0, all queues on the route of each flow are empty 
except for the first queue, then it follows that (\ref{eq:qdcon}) holds for 
all (scaled) time $t\ge0$, and thus, $\Delta q_\sk(t) \ge 0$ holds for all 
time $t \ge 0$. 

\high{
\noindent \HRule

\emph{Remark:} Note that we make the assumption of empty queues for 
ease of analysis. Even without this assumption, we can show that there 
exists a finite time $T > 0$ such that for all time $t \ge T$, (\ref{eq:dqg}) 
holds for all $(\sk) \in \Pair$. This can be proved by induction. The 
detailed proof can be found in Appendix~\ref{app:lem:pd},
but the basic idea is as follows: Consider a flow $\hat{s} \in \Flow$. 
We want to show that there exists a finite time $T_{\hat{s}}>0$ such that
for all time $t \ge T_{\hat{s}}$, (\ref{eq:dqg}) holds for all $(\hat{s},k)$
with $k \in \{1,2,\cdots, \Hop(\hat{s}) \}$.
\begin{enumerate}
\item First, we show that there exists a finite time $T_{\hat{s},1} > 0$ 
such that for all time $t \ge T_{\hat{s},1}$, (\ref{eq:dqg}) holds for 
link-flow-pair $(\hat{s},1)$. Suppose that (\ref{eq:dqg}) does not hold 
for $(\hat{s},1)$. Then Q-BP does not schedule $(\hat{s},1)$, i.e., 
$q_{\hat{s},1}(t)$ does not decrease and $q_{\hat{s},2}(t)$ does not 
increase. On the other hand, due to the exogenous arrivals at the source 
node of flow $\hat{s}$, $q_{\hat{s},1}(t)$ must increase with time. Hence, 
there must exist a finite time $T_{\hat{s},1}$ such that (\ref{eq:dqg}) 
holds for $(\hat{s},1)$ at time $T_{\hat{s},1}$. We can further show that 
(\ref{eq:dqg}) holds for all $t \ge T_{\hat{s},1}$ under Q-BP. This can 
be proved by contradiction.

\item Then, we discuss the induction step: Consider $k \in \{1,2,\cdots,
\Hop(\hat{s})-1\}$. Suppose that for all time $t \ge T_{\hat{s},k} > 0$, 
(\ref{eq:dqg}) holds for $(\hat{s},j)$ and for all $j \in \{1,2,\cdots,k\}$, 
we show that there exists a finite time $T_{\hat{s},k+1} \ge T_{\hat{s},k}$ 
such that for all time $t \ge T_{\hat{s},k+1}$, (\ref{eq:dqg}) holds for 
$(\hat{s},j^{\prime})$ and for all $j^{\prime} \in \{1,2,\cdots,k+1\}$. For 
simplicity, we consider the case for which $k=1$, and the general induction 
step follows similarly. Now, suppose that (\ref{eq:dqg}) does not hold for 
$(\hat{s},2)$, and we prove it by contradiction. Clearly, Q-BP will schedule 
only link-flow-pairs for which (\ref{eq:dqg}) holds (i.e., link-flow-pair 
$(\hat{s},1)$ in this case). Hence, the fluid limit model of the subsystem 
that consists of link-flow-pairs for which (\ref{eq:dqg}) holds must be 
stable, from the throughput-optimality of Q-BP (see Proposition~\ref{pro:qbp}). 
This, in particular, implies that $q_{\hat{s},1}$ is stable, which further 
implies that $q_{\hat{s},2}(t)$ must increase with time, because Q-BP keeps 
forwarding packets from $q_{\hat{s},1}$ to $q_{\hat{s},2}$ while not serving 
$q_{\hat{s},2}$. Hence, there must exist a finite time $T_{\hat{s},2} \ge 
T_{\hat{s},1}$ such that for all time $t \ge T_{\hat{s},2}$, (\ref{eq:dqg}) 
holds for $(\hat{s},2)$. 
\end{enumerate}
Hence, letting $T_{\hat{s}} \triangleq T_{\hat{s},\Hop(\hat{s})}$, we have
that for all time $t \ge T_{\hat{s}}$, (\ref{eq:dqg}) holds for all $(\hat{s},k)$
with $k \in \{1,2,\cdots, \Hop(\hat{s}) \}$. Since the above arguments can 
be applied to any flow $\hat{s} \in \Flow$, we can complete the proof by 
setting $T \triangleq \max_{\hat{s} \in \Flow} T_{\hat{s}}$. 

\noindent \HRule
} 
\subsection{Throughput-Optimality of Q-BP} \label{subsec:qbp}
\begin{proposition}
\label{pro:qbp}
Q-BP can support any traffic with arrival rate vector that is 
strictly inside $\Lambda^{*}$.
\end{proposition}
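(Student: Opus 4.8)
\noindent\emph{Proof plan.}\ The strategy is the standard fluid-limit argument: show that the fluid model drains within a \emph{uniform} finite time, and then apply Lemma~\ref{lem:stab_cri}. The one structural fact that must first be transferred from the prelimit to the fluid scale is that, at every regular time $t$, the limiting service rate $\vpi(t)$ is supported on schedules that maximize the Q-BP weight, i.e.\ $\sum_{(\sk)\in\Pair}\Delta q_\sk(t)\,\pi_\sk(t)=\max_{\vM\in\Matching_{\Pair}}\sum_{(\sk)\in\Pair}\Delta q_\sk(t)\,M_\sk$. This holds because the chosen schedules $\vM^{(\xnj)}(\tau)$ solve the Q-BP problem~(\ref{eq:qbp}) for the weights $\Delta Q^{(\xnj)}_\sk(\tau)$ and, the limits $q_\sk$ being Lipschitz, after scaling these weights are uniformly close to $\Delta q_\sk(t)$ on a small window around $\xnj t$; hence any schedule that is not weight-maximizing for $\Delta q_\sk(t)$ loses by a fixed positive margin and is eventually never selected, so $\vpi(t)$ lies in the convex hull of the weight-maximizing schedules.

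Next I would take the quadratic Lyapunov function $L(t)\triangleq\sum_{(\sk)\in\Pair}q_\sk(t)^2$. Differentiating at a regular time using~(\ref{eq:dq}) (the $q_\sk=0$ terms vanish and $\psi_{s,0}=\lambda_s$), applying $\psi_\sk\le\pi_\sk$ from~(\ref{eq:pp}) together with $q_{s,k+1}\ge0$, and telescoping the internal transfers hop by hop along each route (using $q_{s,\Hop(s)+1}=0$), I expect the computation to give
\begin{equation*}
\tfrac{d}{dt}L(t)\ \le\ 2\sum_{s\in\Flow}\lambda_s q_{s,1}(t)\ -\ 2\max_{\vM\in\Matching_{\Pair}}\sum_{(\sk)\in\Pair}\Delta q_\sk(t)\,M_\sk.
\end{equation*}
Because $\vlambda$ is strictly inside $\Lambda^{*}$, there are $\delta>0$ and $\vphi'\in Co(\Matching_{\Pair})$ with $\phi'_\sk\ge\lambda_s+\delta$ for all $(\sk)$; setting $\epsilon=\delta/\max_s\Hop(s)$, the vector with components $\phi_\sk=\lambda_s+k\epsilon$ satisfies $0\le\vphi\le\vphi'$ componentwise and hence (turning links off preserves feasibility) $\vphi\in Co(\Matching_{\Pair})$. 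A second telescoping gives $\sum_{(\sk)\in\Pair}\Delta q_\sk(t)\,\phi_\sk=\sum_{s\in\Flow}\lambda_s q_{s,1}(t)+\epsilon\|\vec q(t)\|$, and since the maximum over $\Matching_{\Pair}$ dominates the value at $\vphi$, we obtain $\frac{d}{dt}L(t)\le-2\epsilon\|\vec q(t)\|\le-2\epsilon\sqrt{L(t)}$, hence $\frac{d}{dt}\sqrt{L(t)}\le-\epsilon$ whenever $L(t)>0$. By~(\ref{eq:initial}), (\ref{eq:pq}), (\ref{eq:qf}) (with $\hat f_\sk(0)=0$) the initial fluid mass is $\sum_{(\sk)\in\Pair}q_\sk(0)=1$, so $L(0)\le1$ and therefore $L(t)=0$ for all $t\ge T\triangleq\lceil1/\epsilon\rceil$.

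Finally I would feed this into Lemma~\ref{lem:stab_cri}. Along the convergent subsequence $\xnj$, $\tfrac{1}{\xnj}\|\Markov^{(\xnj)}(\xnj T)\|=\tfrac{1}{\xnj}\|\vec Q^{(\xnj)}(\xnj T)\|\to\|\vec q(T)\|=0$; since every positive sequence $\xn\to\infty$ has such a convergent sub-subsequence, $\tfrac{1}{\xn}\|\Markov^{(\xn)}(\xn T)\|\to0$ almost surely. A uniform-integrability argument --- bounding $\|\vec Q^{(\xn)}(\xn T)\|\le\xn+\sum_{s\in\Flow}\sum_{\tau<\xn T}A_s(\tau)$ and invoking~(\ref{eq:slln}) --- upgrades this to $\Expect[\tfrac{1}{\xn}\|\Markov^{(\xn)}(\xn T)\|]\to0$, which is exactly~(\ref{eq:stab_cri}); Lemma~\ref{lem:stab_cri} then yields stability, and since $\vlambda$ was an arbitrary point strictly inside $\Lambda^{*}$, Q-BP supports all such traffic.

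The step I expect to be the real obstacle is the first one: making rigorous that the discrete MaxWeight rule pins $\vpi(t)$ into the convex hull of the weight-maximizing schedules at \emph{every} regular time, including the bookkeeping for ties in the argmax. Everything downstream is routine; note in particular that the genuinely multi-hop difficulty --- controlling \emph{all} the queues rather than only the source queues --- is absorbed entirely by letting the test multiplier $\vphi$ grow along each route ($\phi_\sk=\lambda_s+k\epsilon$), which is where strict interiority of $\vlambda$ and the finiteness of the route lengths enter.
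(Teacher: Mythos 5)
Your proof is correct and takes essentially the same route as the paper's: a quadratic Lyapunov function on the fluid model whose drift is made negative by combining the fluid-scale MaxWeight property of $\vpi(t)$ with a test vector in $Co(\Matching_{\Pair})$ that strictly dominates $\vlambda$. Your choice $\phi_\sk=\lambda_s+k\epsilon$ and the resulting explicit bound $\tfrac{d}{dt}L\le-2\epsilon\|\vec q(t)\|$ with a concrete drain time is a neat quantitative refinement of what the paper only states qualitatively before appealing to Theorem 4.2 of Dai, and the ``pinning'' of $\vpi(t)$ onto weight-maximizing schedules that you flag as delicate is equally left implicit in the paper, so you have introduced no gap not already present there.
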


\high{Before giving the proof of Proposition~\ref{pro:qbp}, in the
following lemma, we present a linear relation between cumulative 
queue length $p_\sk(t)$ and waiting time $w_\sk(t)$, which is 
used for proving Proposition~\ref{pro:qbp}. 
\begin{lemma}
\label{lem:pw}
For any fixed $t_\sk > 0$, the two conditions $u_{\sk}(t_\sk) > 0$ 
and $\hat{f}_\sk(t_\sk)$ $>$ $f_s(0)$ are equivalent for every 
link-flow-pair $(\sk) \in \Pair$. Further, if the conditions hold, 
we have
\begin{equation}
\label{eq:pw}
p_\sk(t) = \lambda_s w_{\sk}(t),
\end{equation}
for all $t \ge t_\sk$, with probability one.
\end{lemma}

\begin{figure}[t]
\centering
\epsfig{file=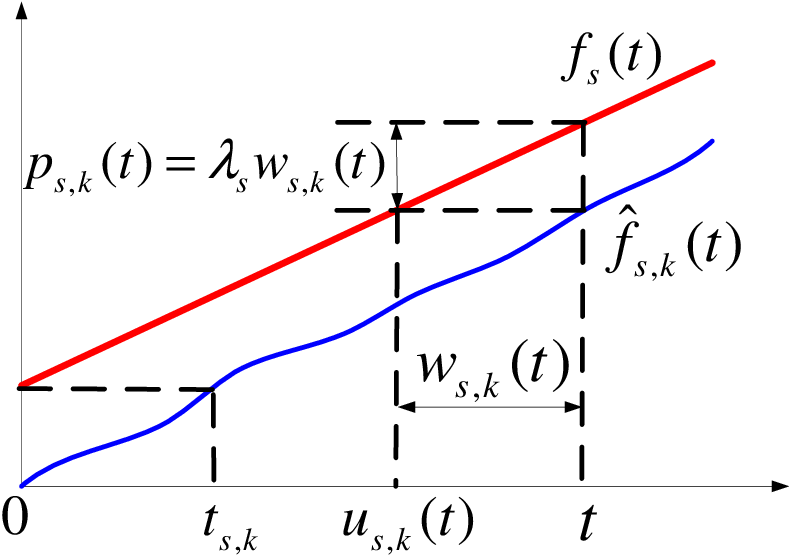,width=0.5\linewidth}
\caption{Linear relation between queue lengths and delays in the fluid limits.}
\label{fig:pw}
\end{figure}

Fig.~\ref{fig:pw} describes the relations between the variables.

\begin{proof}
Since the first part, i.e., that the two conditions are equivalent, is straightforward
from the definition of fluid limits and (\ref{eq:u}), we focus on the second part, 
i.e., if $\hat{f}_\sk(t_\sk) > f_s(0)$, then (\ref{eq:pw}) follow.

Suppose that $\hat{f}_\sk(t_\sk) > f_s(0)$. Then, by the definition of $u_{\sk}(t)$, 
we have $\hat{f}_\sk(t) = f_s(u_{\sk}(t))$, for all $t \ge t_\sk$. From (\ref{eq:pf}), 
(\ref{eq:ff}) and (\ref{eq:uw}), we obtain that
\[
\begin{split}
p_\sk(t) &= f_s(t) - \hat{f}_\sk(t)	\\
		 &= (f_s(0) + \lambda_s t) - (f_s(0) + \lambda_s u_{\sk}(t) ) \\
		 &= \lambda_s \cdot \left( t-u_{\sk}(t) \right)	\\
		 &= \lambda_s w_{\sk}(t).
\end{split}
\]
\end{proof}

\begin{proof}[Proof of Proposition~\ref{pro:qbp}]
We prove stability using standard Lyapunov techniques Let 
$V(\vec{q}(t))$ denote the Lyapunov function defined as
\begin{equation}
\textstyle V(\vec{q}(t)) \triangleq \frac 1 2 \sum_{(\sk) \in \Pair}
\left(q_\sk(t)\right)^2.
\end{equation}

From the results of Lemmas~\ref{lem:stab_cri} and \ref{lem:pw}, to show 
positive recurrence, we only need to prove that for any $\zeta>0$, there 
exists a finite time $T_1 > 0$ such that for any fluid limit with 
$\|\vec{q}(0)\| \le 1$, we have 
\begin{equation}
\| \vec{q}(t)\| \le \zeta,
\end{equation}
for all time $t \ge T_1$. To show the above, it is sufficient to show 
that for any $\zeta_1 > 0$, there exists $\zeta_2>0$ such that 
$V(\vec{q}(t)) \ge \zeta_1$ implies $\frac{D^+}{dt^+} V(\vec{q}(t)) 
\le -\zeta_2$ for any regular time $t \ge 0$, where $\frac{D^+}{dt^+} 
V(\vec{q}(t)) = \lim_{\delta \downarrow 0} \frac {V(\vec{q}(t+\delta)) 
- V(\vec{q}(t))} {\delta}$.

Suppose $\vlambda$ is strictly inside $\Lambda^{*}$, then there exists 
a vector $\vphi \in Co(\Matching_{\Pair})$ such that $\vlambda < \vphi$, 
i.e., $\lambda_s < \phi_{\sk}$ for all $(\sk) \in \Pair$. Since $\vec{q}(t)$ 
is differentiable, then for any regular time $t \ge 0$, we can obtain the 
derivative of $V(\vec{q}(t))$ as
\begin{equation}
\label{eq:dol}
\begin{split}
%
\textstyle \frac{D^+}{dt^+} & V(\vec{q}(t)) \\
\stackrel{(a)}=& \textstyle \sum_{(\sk) \in \Pair} q_\sk(t) \cdot \left(\psi_{s,k-1}(t) - \pi_\sk(t)\right)  \\
\stackrel{(b)}\le& \textstyle \sum_{(\sk) \in \Pair} q_\sk(t) \cdot \left(\pi_{s,k-1}(t) - \pi_\sk(t)\right)  \\
=& \textstyle \sum_{(\sk) \in \Pair} \Delta q_\sk(t) \cdot \lambda_s \\
~& \textstyle - \sum_{(\sk) \in \Pair} \Delta q_\sk(t) \cdot \pi_\sk(t) \\
=& \textstyle \sum_{(\sk) \in \Pair} \Delta q_\sk(t) \cdot \left( \lambda_s - \phi_\sk \right) \\
~&+ \textstyle \sum_{(\sk) \in \Pair} \Delta q_\sk(t) \cdot (\phi_\sk-\pi_\sk(t)), \\
\end{split}
\end{equation}
where (a) and (b) are from (\ref{eq:dq}) and (\ref{eq:pp}), respectively.

Note that $q_\sk(t) \le \Hop^{\max} \max_{(\rj) \in \Pair} \Delta q_\rj(t)$, 
for any $(\sk) \in \Pair$. Hence, we have $V(\vec{q}(t)) \le \frac {1} {2} 
|S| \Hop^{\max} (\Hop^{\max} \max_{(\rj) \in \Pair} \Delta q_\rj(t))^2$. 
Let us choose $\zeta_3 = \sqrt{\frac {2\zeta_1} {|S|(\Hop^{\max})^3}}$, 
then $V(\vec{q}(t)) \ge \zeta_1$ implies $\max_{(\rj) \in \Pair} \Delta 
q_\rj(t) \ge \zeta_3$. Since $\vlambda < \vphi(t)$ and $\Delta q_\sk(t) 
\ge 0$ for all $(\sk) \in \Pair$, then in the final result of (\ref{eq:dol}), 
we can conclude that the first term is bounded as follows:
\[
\begin{split}
\textstyle \sum_{(\sk) \in \Pair} \Delta q_\sk(t) \cdot \left( \lambda_s - \phi_\sk \right)
& \textstyle \le -\zeta_3 \min_{s,k}(\phi_\sk-\lambda_s) \\
& \triangleq -\zeta_2 < 0,
\end{split}
\]
and that the second term becomes non-positive due to the following. 
Since Q-BP chooses schedules that maximize the queue differential 
weight sum (\ref{eq:qbp}), then we have that 
\[
\textstyle \vpi(t) \in \argmax_{\vphi \in 
Co(\Matching_{\Pair})} \sum_{(\sk) \in \Pair} \Delta q_\sk(t) \cdot \phi_\sk,
\]
which implies that 
\[
\textstyle \sum_{(\sk) \in \Pair} \Delta q_\sk(t) 
\cdot \phi_\sk \le \sum_{(\sk) \in \Pair} \Delta q_\sk(t) 
\cdot \pi_\sk(t),
\]
for all $\vphi \in Co(\Matching_{\Pair})$. Therefore, this shows that 
$V(\vec{q}(t)) \ge \zeta_1$ implies $\frac{D^+}{dt^+} V(\vec{q}(t)) 
\le -\zeta_2$. Then, it immediately follows that for any $\zeta>0$,
there exists a finite time $T_1 > 0$ such that for any fluid limit 
with $\|\vec{q}(0)\| \le 1$, we have $\| \vec{q}(t)\| \le \zeta$ for 
any time $t \ge T_1$. Also, we have
\begin{equation}
\label{eq:plezeta}
p_\sk(t) \le \| \vec{q}(t) \| \le \zeta,
\end{equation}
for all $(s,k) \in \Pair$. Let us choose $T_1$ large enough, then it 
follows from (\ref{eq:initial}), (\ref{eq:pf}) and (\ref{eq:plezeta}) that 
\[
\hat{f}_\sk(T_1) = f_s(T_1) - p_\sk(T_1) > f_s(0),
\]
for all $(s,k) \in \Pair$ and for any time $t \ge T_1$. Hence, we have 
(\ref{eq:pw}) from Lemma~\ref{lem:pw}, and thus, we have 
\[
\begin{split}
\|\vec{q}(t)\| + \| \vec{w} \| 
&\stackrel{(a)}\le \|\vec{q}(t)\| + \frac {1} {\min_s \lambda_s} \| \vec{p}(t) \| \\
&\stackrel{(b)}\le \left(1 + \frac {|S|\Hop^{\max}} {\min_s \lambda_s} \right) \zeta \\
&\triangleq \epsilon_1,
\end{split}
\]
where (a) and (b) are from (\ref{eq:pw}) and (\ref{eq:plezeta}), respectively. 
We can make $\epsilon_1$ arbitrarily small by choosing small enough $\zeta$.

Now, consider any fixed sequence of processes $\{ \frac {1} {x} \Markov^{(x)}(xt), 
x=1,2,\cdots\}$ (for simplicity also denoted by $\{x\}$). Hence, for any fixed 
$\epsilon_1>0$, we can always choose a large enough integer $T>0$ such that for 
any subsequence $\{\xn\}$ of $\{x\}$, there exists a further (sub)subsequence 
$\{\xnj\}$ such that
\[
\textstyle \lim_{j \rightarrow \infty} \frac {1} {\xnj} \| \Markov^{(\xnj)} (\xnj T) \| 
= \|\vec{q}(T)\|+\|\vec{w}(T)\| \le \epsilon_1
\]
almost surely. This in turn implies (for small enough $\epsilon_1$) that
\begin{equation}
\label{eq:mc_conv}
\textstyle \limsup_{x \rightarrow \infty} \frac {1} {x} \| \Markov^{(x)} (x T) \| 
\le \epsilon_1 \triangleq 1 - \epsilon < 1
\end{equation}
almost surely. This is because there must exist a subsequence of $\{ x \}$ that 
converges to the same limit as $\limsup_{x \rightarrow \infty} \frac {1} {x} \| 
\Markov^{(x)} (x T) \|$.

One can readily show that the sequence $\{ \frac {1} {x} \| \Markov^{(x)} (x T) \|, 
x=1,2,\cdots \}$ is uniformly integrable using standard techniques by invoking 
the Dominated Convergence Theorem and so the details are omitted here. Then, the 
almost sure convergence in (\ref{eq:mc_conv}) along with uniform integrability 
implies the following convergence in the mean:
\[
\textstyle \limsup_{x \rightarrow \infty} \Expect [ \frac {1} {x} \| 
\Markov^{(x)} (x T) \| ] \le 1 - \epsilon.
\]

Since the above convergence holds for any sequence of processes 
$\{\frac {1} {x} \| \Markov^{(x)} (x \cdot), x=1,2,\cdots\}$, the 
condition of (\ref{eq:stab_cri}) in Lemma~\ref{lem:stab_cri} is 
satisfied. This completes the proof.
\end{proof}
}
%
%
%

\section{Delay-based Back-Pressure Algorithm} \label{sec:dbp}

\subsection{Algorithm Description} \label{subsec:alg}
\high{In this section, we develop the Delay-based Back-Pressure (D-BP) policy, 
and in Section~\ref{subsec:dbp}, we prove that it is throughput optimal. 
A similar delay-based approach has appeared first in \cite{andrews04} for 
single-hop networks.} However, as mentioned earlier, when packets travel multiple 
hops before leaving the system, the analytical approach in \cite{andrews04} (i.e., 
using HOL delay in the queue as the metric) cannot capture queueing dynamics of 
multihop traffic and the resultant solutions cannot guarantee the linear relation. 
We will carefully design link weights using a new delay metric, and re-establish 
the linear relation between queue lengths and delays in the fluid limits for 
multihop traffic. 

Recall that $W_{\sk}(t)$ denotes the sojourn time of the HOL packet of queue 
$Q_{\sk}(t)$ in the network, where the time is measured from the time when the 
packet arrives in the network. We define the delay metric $\hat{W}_\sk(t)$ as
\begin{equation} 
\label{eq:linkd}
\hat{W}_\sk(t) \triangleq W_{\sk}(t) - W_{s,k-1}(t),
\end{equation}
and also define \emph{delay differential} as
\begin{equation} 
\label{eq:dd}
\Delta \hat{W}_\sk(t) \triangleq \hat{W}_\sk(t) - \hat{W}_{s,k+1}(t).
\end{equation}
The relations between these delay metrics are illustrated in Fig.~\ref{fig:weight}.
We specify the back-pressure algorithm with the new delay metric as follows. 

\noindent {\bf Delay-based Back-Pressure (D-BP) algorithm:}
\begin{equation}
\label{eq:dbp}
\textstyle \vec{M}^* \in \argmax_{\vec{M} \in \Matching_\Pair} \sum_{(\sk) \in \Pair} \Delta \hat{W}_\sk(t) \cdot M_\sk.
\end{equation}

\noindent D-BP computes the weight of $(\sk)$ as the delay differential $\Delta 
\hat{W}_\sk(t)$ and solves the MaxWeight problem, i.e., finds a set of non-interfering 
link-flow-pairs that maximizes weight sum. Ties can be broken arbitrarily if there 
is more than one schedule that has the largest weight sum. An intuitive interpretation 
of the new delay metric $\hat{W}_\sk(t)$ is as follows. Note that the queue length 
$Q_\sk(t)$ is roughly the number of packets arriving at the source node of flow $s$ 
during the time slots between $[U_\sk(t),U_\sk(t)+\hat{W}_\sk(t))$, and from the SLLN,
$Q_\sk(t)$ is on the order of $\lambda_s \hat{W}_\sk(t)$ when $\hat{W}_\sk(t)$ is large. 
Hence, a large $\hat{W}_\sk(t)$ implies a large queue length $Q_\sk(t)$, and similarly, 
a large delay differential $\Delta \hat{W}_\sk(t)$ implies a large queue length 
differential $\Delta Q_\sk(t)$. Therefore, being favorable to the delay weight sum 
in (\ref{eq:dbp}) is in some sense ``equivalent" to being favorable to the queue 
length weight sum in (\ref{eq:qbp}) as Q-BP. We later formally establish the linear 
relation between the fluid limits of queue lengths and delays in Section~\ref{subsec:dbp}. 

\high{
We highlight here that the last packet problem can be solved by the D-BP 
scheme using our proposed delay metric. Let us focus on the source nodes 
first. Suppose that at the source node of flow $s$, there are a finite 
number of packets waiting to be transmitted and there are no further packet 
arrivals. From the definition of (\ref{eq:linkd}) and the fact that $W_{s,0}(t) 
= 0$, we have $\hat{W}_{s,1}(t) = W_{s,1}(t)$. If some of the packets are 
stuck at the source node, the delay metric $\hat{W}_{s,1}(t)$ keeps increasing 
with time. On the other hand, $\hat{W}_{s,2}(t) = W_{s,2}(t) - W_{s,1}(t)$ 
is equal to the inter-arrival time between two packets and does not increase 
with time, in particular because some packets at the source node are not served. 
Hence, the delay differential $\Delta \hat{W}_{s,1}(t) = \hat{W}_{s,1}(t) - 
\hat{W}_{s,2}(t)$ also increases with time. This implies that under DBP, the 
increasing delay will eventually ``push" all the packets that are waiting at 
the source node to the second-hop link. After all the packets leave the source 
node, we can observe similar procedure at the transmitting node of the second-hop 
link: since $Q_{s,1}(t) = 0$ and $W_{s,1}(t) = 0$, we have $\hat{W}_{s,2}(t) = 
W_{s,2}(t)$. Repeating the same argument, we can conclude that all the packets 
will ultimately be ``pushed" to the destination node of flow $s$.
}

\begin{figure}[t]
\centering
\epsfig{file=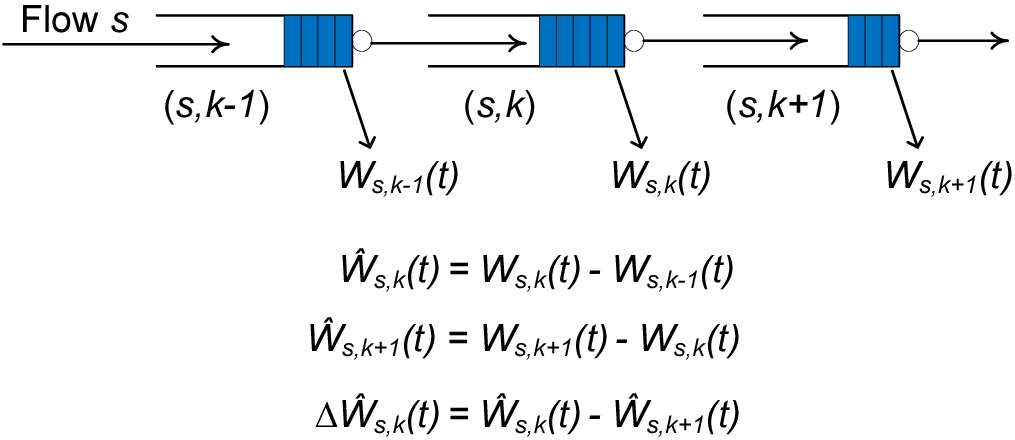,width=0.5\linewidth}
\caption{Delay differentials using new delay metric.}
\label{fig:weight}
\end{figure}

\high{Recall that $U_{\sk}(t)$ denotes the time when the HOL packet of 
$Q_\sk$ arrives in the network (or the source node, rather than the 
current node). We let $U^{\prime}_{\sk}(t)$ denote the time when the 
packet that arrives (in the network or the source node) immediately 
after the HOL packet of $Q_\sk$ arrives in the network. Let $B_{\sk}(t) 
\triangleq U^{\prime}_{\sk}(t) - U_{\sk}(t)$ denote the inter-arrival 
time between the HOL packet of $Q_\sk$ and the packet that arrives 
immediately after it. Clearly, D-BP will not schedule link-flow-pair 
$(\sk)$ if 
\[
\hat{W}_\sk(t) - \hat{W}_{s,k+1} (t) < 0.
\]
Hence, if link-flow-pair $(\sk)$ is scheduled, it must satisfy 
$\hat{W}_\sk(t) - \hat{W}_{s,k+1} (t) \ge 0$. Moreover, the delay 
$\hat{W}_\sk(t)$ can decrease by at most $B_{\sk}(t)$ within one 
time slot, and the delay $\hat{W}_{s,k+1} (t)$ can increase by at 
most $B_{\sk}(t)$ within one time slot, due to the assumption of 
unit link capacity (a similar argument also holds with non-unit 
link rates). Therefore, if inequality
\begin{equation}
\label{eq:wdcon}
\hat{W}_\sk(t) \ge \hat{W}_{s,k+1}(t) - 2 B_{\sk}(t)
\end{equation} 
initially holds for all $(\sk)$ at time slot 0, then the inequality holds 
for all time slot $t\ge0$. This further leads to
\begin{equation}
\label{eq:fluidwcon}
\hat{w}_\sk(t) \ge \hat{w}_{s,k+1}(t),~\text{i.e.,}~\Delta \hat{w}_\sk(t) \ge 0, 
\end{equation}
for all (scaled) time $t\ge0$, in the fluid limits, from the convergence
of (\ref{eq:fluid_w})} and that $\frac{1} {\xnj} B^{(\xnj)}_{\sk}(\xnj t) 
\rightarrow$ $0$, as $\xnj \rightarrow \infty$ (otherwise we will arrive a 
contradiction with the assumption on the arrival process, i.e., it satisfies 
the Strong Law of Large Numbers). Recall that we assume that all queues on 
each route are empty at time slot 0, except for the first queue, then 
(\ref{eq:wdcon}) and (\ref{eq:fluidwcon}) follow.

\subsection{Throughput-Optimality} \label{subsec:dbp}
\high{The following lemma provides the linear relation between queue 
lengths and delays in the fluid limits. 

\begin{lemma}
\label{lem:qw}
For any fixed $t_\sk > 0$, if $\hat{f}_\sk(t_\sk)$ $>$ $f_s(0)$
for every link-flow-pair $(\sk) \in \Pair$, then we have
\begin{equation}
\label{eq:qw}
q_\sk(t) = \lambda_s \hat{w}_\sk(t),
\end{equation}
for all $t \ge t_\sk$, with probability one.
\end{lemma}

\begin{proof}
It follows immediately from Lemma~\ref{lem:pw}. 
\end{proof}
}

We emphasize the importance of (\ref{eq:qw}). Lemma~\ref{lem:qw} implies that 
after a finite time (i.e., $\max_{(\sk) \in \Pair} t_\sk$), the queue lengths 
are $\lambda_s$ times delays in the fluid limit model. Then the schedules of 
D-BP are very similar to those of Q-BP, which implies that D-BP achieves the 
optimal throughput region $\Lambda^*$. In the following, \high{we show that 
the condition of Lemma~\ref{lem:qw} indeed holds, i.e.,} such a finite time 
exists.

\begin{lemma}
\label{lem:init}
Consider a system under the D-BP policy. Then for $\vlambda$ strictly inside 
$\Lambda^{*}$, there exists a finite time $T > 0$ such that the fluid limits 
satisfy the following property with probability one, 
\begin{equation}
\hat{f}_\sk(T) > f_s(0),
\end{equation}
for all link-flow-pairs $(\sk) \in \Pair$.
\end{lemma}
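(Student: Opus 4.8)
My plan is to argue by contradiction, organised as an induction on the hop index. Since any packet served at hop $k$ of flow $s$ has necessarily been served at hops $1,\dots,k-1$, the fluid cumulative service $\hat f_\sk(t)$ is non-increasing in $k$ along each route; hence if $\hat f_\sk(t)\le f_s(0)$ for all $t$ at some hop, the same holds at every later hop of that flow. It is therefore enough to prove, by induction on $m\ge 0$, that there is a finite $T_m$ with $\hat f_\sk(t)>f_s(0)$ for all $t\ge T_m$ and all $(\sk)\in\Pair$ with $k\le m$; the case $m=0$ is vacuous, and taking $T=T_{\max_s \Hop(s)}$ at the end yields the stated property (for almost every sample path along which fluid limits exist). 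Throughout I use the fact recorded in (\ref{eq:fluidwcon}) that the D-BP initial condition keeps $\Delta\hat w_\sk(t)\ge0$ for all $t$, so that along each route the HOL delays satisfy $0=w_{s,0}(t)\le w_{s,1}(t)\le\cdots\le w_{s,\Hop(s)}(t)$ and are concave (the increments $\hat w_\sk(t)$ are non-increasing in $k$).

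For the step $m\to m+1$, the inductive hypothesis together with Lemma~\ref{lem:pw} gives $q_{s,i}(t)=\lambda_s\hat w_{s,i}(t)$ and $p_{s,i}(t)=\lambda_s w_{s,i}(t)$ for all $i\le m$ and $t\ge T_m$. Suppose, for contradiction, that some flow $s^*$ with $\Hop(s^*)\ge m+1$ has $\hat f_{s^*,m+1}(t)\le f_{s^*}(0)$ for all $t$. Then: (i) for $t\ge T_m$ hop $m$ of $s^*$ has forwarded strictly more than $f_{s^*}(0)$ packets while hop $m+1$ has forwarded at most $f_{s^*}(0)$, so $q_{s^*,m+1}(t)=\hat f_{s^*,m}(t)-\hat f_{s^*,m+1}(t)\ge\epsilon$ for a fixed $\epsilon>0$; (ii) the HOL packet of $Q_{s^*,m+1}$ is always one of the initial packets, hence $u_{s^*,m+1}(t)\le0$ and $w_{s^*,m+1}(t)\ge t$; and (iii) $\int_0^\infty\psi_{s^*,m+1}(\tau)\,d\tau=\lim_{t\to\infty}\hat f_{s^*,m+1}(t)\le f_{s^*}(0)<\infty$, so link $(s^*,m+1)$ is served at asymptotically zero average rate. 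From (ii) and the concavity/monotonicity of the HOL delays, $\hat w_{s^*,1}(t)=\max_k\hat w_{s^*,k}(t)\ge\frac{1}{\Hop(s^*)}w_{s^*,\Hop(s^*)}(t)\ge\frac{1}{\Hop(s^*)}w_{s^*,m+1}(t)\ge t/\Hop(s^*)$. Since $\sum_{k=1}^{\Hop(s^*)}\Delta\hat w_{s^*,k}(t)=\hat w_{s^*,1}(t)$ is a sum of $\Hop(s^*)$ nonnegative terms, some link $(s^*,k^\dagger)$ has $\Delta\hat w_{s^*,k^\dagger}(t)\ge t/\Hop(s^*)^2\to\infty$.

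It remains to convert this linearly growing weight into a contradiction, and this is the step I expect to be the main obstacle, because interference makes ``largest weight'' and ``gets served'' not automatically equivalent. The fluid limit of D-BP satisfies $\vpi(t)\in\argmax_{\vphi\in Co(\Matching_\Pair)}\sum_{(\sk)\in\Pair}\Delta\hat w_\sk(t)\phi_\sk$, exactly as argued for Q-BP in the proof of Proposition~\ref{pro:qbp}; and since every delay differential is nonnegative, any link-flow-pair whose delay differential exceeds the combined delay differentials of the link-flow-pairs in $I(\sk)$ must belong to every maximizing schedule (delete the conflicting pairs, insert the pair, and the weight sum strictly increases). The remaining work is to show that the link-flow-pairs interfering with a maximal-weight pair of $s^*$ have bounded delay differentials: for pairs on the already-cleared hops $i\le m$ this can be propagated back toward the source queue (a blow-up of $\hat w_{s,i}=q_{s,i}/\lambda_s$ would make $\Delta\hat w_{s,i-1}$ eventually negative, so D-BP stops serving hop $i-1$ and $q_{s,i}$ can only grow at the rate at which hop $i-1$ is fed, iterating up to the source, whose input rate is the constant $\lambda_s$), and for pairs on other flows by the same device. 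Granting this, a maximal-weight pair of $s^*$ lies in $\vpi(t)$ for all large regular $t$, so its cumulative fluid service diverges; identifying this pair as a not-yet-cleared one then contradicts (iii), or, when the pair is $(s^*,m+1)$ itself, it contradicts the standing assumption directly. This closes the inductive step, and with it the lemma.
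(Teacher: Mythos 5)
Your induction skeleton and your derivation that some $\Delta\hat w_{s^*,k^\dagger}(t)$ grows linearly are both sound, but the step you flag as ``the main obstacle'' is a genuine gap, and the sketch you offer for it does not close. The first problem is that you aim to show a \emph{single} pair $(s^*,k^\dagger)$ dominates the combined delay differentials of everything it interferes with. At the moment of the inductive step several flows may simultaneously have uncleared hops with HOL delays of order $t$, and those hops can perfectly well lie in $I(s^*,k^\dagger)$; your device for cleared hops does not apply to them, and ``by the same device'' is exactly where the argument stops. The missing observation, which is the crux of the paper's proof, is that for any pair $(r,j)$ with \emph{both} $(r,j)$ and $(r,j+1)$ still uncleared, the HOL delays $w_{r,j}(t)$ and $w_{r,j+1}(t)$ are both squeezed into $[t-\epsilon_1,t]$, hence $\Delta\hat w_{r,j}(t)\le\epsilon_1$. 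So every pair in the ``deep interior'' of the uncleared region automatically has a small differential, while every pair on the forward boundary of the cleared set (one endpoint cleared, one not) has a differential of order $t$. It is the \emph{entire boundary set} $S^*$ that dominates, not one distinguished pair; the right conclusion is that every maximizing schedule must contain some pair in $S^*$, not that one fixed pair lies in all of them.

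The second problem is that you target the wrong contradiction. Item (iii) bounds $\int\psi_{s^*,m+1}$, but turning $\pi$ into $\psi$ needs the queue to be nonempty at the right times, which your outline does not supply. The paper instead contradicts a purely combinatorial bound: under the contradiction hypothesis every packet transmitted on a non-cleared pair is one of the initial packets, so the scaled time ``unavailable'' to the cleared subsystem is bounded a priori by a constant $K_1$; yet the boundary's dominance under D-BP forces this unavailable time to grow like $t$. This $K_1$ bound is also what licenses the appeal to stability of the cleared subnetwork (the stability is conditional on the unavailable time being negligible, so you must establish the bound first), and your back-propagation parenthetical is actually inconsistent as written: ``$\Delta\hat w_{s,i-1}$ eventually negative'' contradicts the invariant $\Delta\hat w\ge 0$ of (\ref{eq:fluidwcon}) that you invoke two sentences earlier, so it cannot be the mechanism that bounds $\hat w_{s,i}$. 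Finally, your induction on hop depth rather than on the cardinality of the cleared set is a legitimate reorganization, but note that the paper's proof uses the relaxed target $\hat f_\sk(T)\ge f_s(\epsilon_1)$ rather than the strict $>f_s(0)$; this $\epsilon_1$ slack is precisely what produces the $[t-\epsilon_1,t]$ squeeze and the constant $K_1$, and without it the uniform small-differential bound on the interior does not fall out.
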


We can prove Lemma~\ref{lem:init} by induction following the techniques 
described in Lemma~7 of \cite{andrews04}. The formal proof is provided in 
Appendix~\ref{app:lem:init}. We next outline an informal discussion, which 
highlights the main idea of the proof. First, we consider the base case. D-BP 
chooses one of the feasible schedules in $\Matching_\Pair$ (we omit the term 
``feasible" in the following, whenever there is no confusion) at each time slot. 
Each schedule receives a fraction of the total time and there must exist a 
schedule that receives at least $\frac {1} {|\Matching_\Pair|}$ fraction of 
the total time. Thus, after a large enough time $T_1 > 0$, there must exist a 
schedule $\vM^*$ that is chosen for at least $\frac {T_1} {|\Matching_\Pair|}$ 
amount of time. The number of initial packets of $\vM^*$ is bounded from 
(\ref{eq:initial}), thus, for a large enough $T_1$, all initial ``fluid"
of at least one link-flow-pair of $\vM^*$ must be completely served, i.e., 
$\hat{f}_\sk(T_1) > f_s(0)$, for at least one $(\sk)$ with $M^*_{\sk} = 1$.

Next, we consider the inductive step. Suppose there exists a $T_l > 0$, 
such that for at least one subset $S_l \subset \Pair$ of cardinality $l$, 
we have 
\begin{equation}
\label{eq:ind1}
\hat{f}_\sk(T_l) > f_s(0),
\end{equation}
for all $(\sk) \in S_l$.
Then there exists $T_{l+1} \ge T_l$ such that
\begin{equation}
\label{eq:ind2}
\hat{f}_\sk(T_{l+1}) > f_s(0),
\end{equation}
holds for all link-flow-pairs $(\sk)$ within at least one subset $S_{l+1} \subset 
\Pair$ of cardinality $l+1$. \high{Since flows travel hop-by-hop, packets that have been 
served by one link must have been served by the link at the previous hop (of the 
flow that the packets belong to). Hence, if $(s,k) \in S_l$, we must have $(s,k-1) 
\in S_l$. Repeating the argument, if $(s,k) \in S_l$, we have $(s,i) \in S_l$ for 
$1 \le i \le k$.} Let 
\begin{equation}
\begin{split}
\label{eq:s1s}
S^*_l \triangleq \{(\rj) ~|~ & (\rj) \notin S_l, (\rj-1) \in S_l, ~\text{for}~ j>1; \\
&~\text{or}~ (\rj) \notin S_l, ~\text{for}~ j=1 \}
\end{split}
\end{equation}
denote the set of link-flow-pairs $(\rj)$ such that $(\rj) \in \Pair \backslash S_l$ 
is the closest hop to the source of $r$. To avoid unnecessary complications, we discuss 
the induction step for $l=1$. The generalization for $l>1$ is straightforward. We show 
that for given $S_1$ and $T_1$, there exists a finite time $T_2 \ge T_1$ such that 
(\ref{eq:ind2}) with $T_2$ holds for at least two different link-flow-pairs.

Let $(\hsk)$ denote the link-flow-pair that satisfies (\ref{eq:ind1}) with $T_1$. 
Since $(\hsk) \in S_l$ implies $(\hat{s},i) \in S_l$ for all $1 \le i \le \hat{k}$, 
we must have $\hat{k} = 1$ and $S_1 = \{(\hat{s}, 1)\}$. From (\ref{eq:s1s}), we have 
that 
\begin{equation}
\label{eq:s1ss}
S_1^* = \{ (r,1)~|~ r \in \Flow \backslash \{ \hat{s} \} \} \cup N_{\hat{s}}, 
\end{equation}
where $N_{\hat{s}} = \{(\hat{s},2)\}$ if $\Hop(\hat{s}) > 1$, and $N_{\hat{s}} = \emptyset$ 
if $\Hop(\hat{s}) = 1$. We discuss only the case that $\Hop(\hat{s}) > 1$, and the other 
case can be easily shown following the same line of analysis. Now suppose that 
\begin{equation}
\label{eq:as}
\hat{f}_{\rj}(t) \le f_r(0), ~\text{for all}~ (\rj) \in \Pair \backslash S_1, 
~\text{and all}~ t \ge 0,
\end{equation}
i.e., for all the link-flow-pairs except those of $S_1$, the total amount of
service up to time $t$ is no greater than the amount of the initial fluid for 
all $t\ge0$. We show that this assumption leads to a contradiction, which 
completes the induction step. 

From the base case and Lemma~\ref{lem:qw}, we have $q_{\hat{s},1}(t)$ $=$ 
$\lambda_{\hat{s}} \hat{w}_{\hat{s},1}(t)$ for all $t \ge T_1$. We view the 
subset of link-flow-pairs $S_1$ as a generalized system, and consider the 
time slots when there is at least one packet transmission from the outside 
of $S_1$, i.e., $(\rj) \in \Pair \backslash S_1$. For each such time slot, 
we say that the time slot is \emph{unavailable} to $S_1$.

\begin{enumerate}

\item The number of such unavailable time slots is bounded from the above by 
$\xnj$, since at every such time slot, at least one initial packet will be 
transmitted and the total number of initial packets is bounded by $\|\vec{Q}(0)\|
\le \xnj$ from (\ref{eq:init_config}). Hence, the amount of (scaled) time 
unavailable to $S_1$ is bounded by $\|\vec{q}(0)\| \le 1$.

\item Since the amount of (scaled) time unavailable to $S_1$ is bounded, there exists 
a sufficiently large $t \ge T_1$ such that the fraction of time that is given to $(\rj) 
\in \Pair \backslash S_1$ is negligible, and we must have $\hat{w}_{\hat{r}, \hat{j}}(t) 
= \Theta(1)$\footnote{We use the standard order notation: $g(n)=o(f(n))$ implies 
$\lim_{n \rightarrow \infty} (g(n)/f(n)) = 0$; and $g(n)=\Theta(f(n))$ implies 
$c_1 \le \lim_{n \rightarrow \infty} (g(n)/f(n)) \le c_2$ for some constants $c_1$ 
and $c_2$.} and $\Delta \hat{w}_{\hat{r}, \hat{j}}(t) = \Theta(1)$ for 
$(\hat{r},\hat{j}) \in \Pair \backslash (S_1 \cup S_1^*)$. 

\item Then, we can restrict our focus on the generalized system $S_1$ to time
$t \ge T_1$, and ignore the time that is unavailable to $S_1$. Then Q-BP and 
D-BP are in some sense ``equivalent" in the generalized system $S_1$ for $t \ge T_1$ 
with the following properties: First, Q-BP will stabilize the system if the arrival 
rate vector is strictly inside $\Lambda^{*}$. Second, since the linear relation (\ref{eq:qw}) holds 
for all link-flow-pairs in $S_1$ from Lemma~\ref{lem:qw}, D-BP will schedule links 
similar to Q-BP and also stabilizes the generalized system $S_1$. 

\item Now let us focus on $S_1^*$. Link-flow-pairs in $S_1^*$ must have some 
initial fluid at $t \ge T_1$ because $S_1 \cap S_1^* = \emptyset$. On the other 
hand, the generalized network $S_1$ is stable. This implies that the delay metrics 
of link-flow-pairs in $S_1^*$ should increase on the same order as we increase $t$, 
i.e., $\hat{w}_{r^*,j^*}(t) = \Theta(t)$ for $(r^*,j^*) \in S_1^*$. Then we have 
$\Delta \hat{w}_{r^*,j^*}(t) = \Theta(t)$, since $\hat{w}_{r^*,j^*+1}(t) = \Theta(1)$ 
from $(r^*,j^*+1) \in \Pair \backslash (S_1 \cup S_1^*)$ and 2). Since the delay 
differentials $\Delta \hat{w}_{\sk}(t)$ for all $(\sk) \in S_1$ and $\Delta \hat{w}_{\hat{r}, 
\hat{j}}(t)$ for all $(\hat{r}, \hat{j}) \in \Pair \backslash (S_1 \cup S_1^*)$ are 
bounded above from stability of $S_1$ and 2), respectively, D-BP will choose link-flow-pairs 
in the set of $S_1^*$ for most of time for a sufficiently large $t$. This implies that 
the amount of time unavailable to $S_1$ is $\Theta(t)$, which contradicts with our 
previous statement in 1) that the fraction of time that is given to $(\rj) \in \Pair 
\backslash S_1$ is negligible.

\end{enumerate}

We provide the detailed proof of Lemma~\ref{lem:init} in Appendix~\ref{app:lem:init}. 

We then present throughput-optimality of D-BP in the following proposition.
\begin{proposition}
\label{pro:dbp}
D-BP can support any traffic with arrival rate vector that is strictly inside $\Lambda^{*}$.
\end{proposition}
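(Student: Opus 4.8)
The plan is to re-run the Lyapunov argument of Proposition~\ref{pro:qbp}, but restricted to the time horizon on which the linear relation of Lemma~\ref{lem:pw} is available, and with a \emph{rate-weighted} quadratic Lyapunov function chosen so that D-BP's own delay-differential weight sum appears in its drift. First I would invoke Lemma~\ref{lem:init} to fix a finite time $T>0$ that works, with probability one, for every fluid limit, i.e.\ $\hat{f}_\sk(T)>f_s(0)$ for all $(\sk)\in\Pair$; combined with Lemma~\ref{lem:pw} this gives $q_\sk(t)=\lambda_s\hat{w}_\sk(t)$ and $p_\sk(t)=\lambda_s w_\sk(t)$ for all $(\sk)\in\Pair$ and all $t\ge T$. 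I would then work with
\begin{equation}
V(\vec{q}(t)) \triangleq \frac12 \sum_{(\sk)\in\Pair}\frac{1}{\lambda_s}\left(q_\sk(t)\right)^2 ,
\end{equation}
which, for $t\ge T$, equals $\frac12\sum_{(\sk)\in\Pair}\lambda_s\left(\hat{w}_\sk(t)\right)^2$.

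For any regular time $t\ge T$ with $V(\vec{q}(t))>0$, I would imitate (\ref{eq:dol}) line by line: using (\ref{eq:dq}), (\ref{eq:pp}) and $\psi_{s,0}=\pi_{s,0}=\lambda_s$ one gets $\DIFF V(\vec{q}(t))=\sum_{(\sk)}\lambda_s^{-1}q_\sk(t)\bigl(\psi_{s,k-1}(t)-\pi_\sk(t)\bigr)\le\sum_{(\sk)}\lambda_s^{-1}q_\sk(t)\bigl(\pi_{s,k-1}(t)-\pi_\sk(t)\bigr)$ (the boundary terms $q_\sk=0$ vanishing as in the Q-BP proof), and then the same summation-by-parts along each route as in (\ref{eq:dol}), now with the substitution $q_\sk(t)=\lambda_s\hat{w}_\sk(t)$ so that $\Delta q_\sk(t)/\lambda_s=\Delta\hat{w}_\sk(t)$, yields for any $\vphi(t)\in Co(\Matching_\Pair)$ with $\vlambda<\vphi(t)$ that
\begin{equation}
\DIFF V(\vec{q}(t)) \le \sum_{(\sk)\in\Pair}\Delta\hat{w}_\sk(t)\left(\lambda_s-\phi_\sk(t)\right) + \sum_{(\sk)\in\Pair}\Delta\hat{w}_\sk(t)\left(\phi_\sk(t)-\pi_\sk(t)\right).
\end{equation}
The first sum is strictly negative: $\Delta\hat{w}_\sk(t)\ge 0$ by (\ref{eq:fluidwcon}), $\lambda_s<\phi_\sk(t)$, and $V(\vec{q}(t))>0$ together with the route-monotonicity $q_\sk\ge q_{s,k+1}$ forces $q_{s,1}(t)>0$ for some $s$, whence $\sum_{k}\Delta\hat{w}_{s,k}(t)=w_{s,1}(t)=q_{s,1}(t)/\lambda_s>0$. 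The second sum is non-positive because, by the same argument used for Q-BP, the fluid limit of D-BP satisfies $\vpi(t)\in\argmax_{\vphi\in Co(\Matching_\Pair)}\sum_{(\sk)}\Delta\hat{w}_\sk(t)\,\phi_\sk$, which is exactly D-BP's MaxWeight objective (\ref{eq:dbp}). Hence $\DIFF V(\vec{q}(t))<0$ on that horizon.

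To finish I would upgrade this to a uniform negative drift: with $\delta\triangleq\min_{(\sk)}(\phi_\sk(t)-\lambda_s)>0$ the first sum is at most $-\delta\sum_s w_{s,1}(t)\le -\delta c\,\|\vec{q}(t)\|$ for a constant $c>0$ (again by route-monotonicity and $q_{s,1}=\lambda_s w_{s,1}$), while $V(\vec{q}(t))\le C\,\|\vec{q}(t)\|^2$; passing to $\sqrt{V}$ then shows $\DIFF\sqrt{V(\vec{q}(t))}$ is bounded above by a negative constant whenever $V(\vec{q}(t))>0$, so $V(\vec{q}(\cdot))$ reaches $0$ at a finite time and stays there (with $V(\vec{q}(T))<\infty$ since $\|\vec{q}(t)\|\le\|\vec{q}(0)\|+t\|\vlambda\|_1$ and $\|\vec{q}(0)\|=1$). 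Thus every fluid limit is absorbed at $0$ in finite time, and Theorem~4.2 of~\cite{dai95} gives stability of the original system for every $\vlambda$ strictly inside $\Lambda^{*}$.

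The only genuinely new point — and the main obstacle — is the choice of Lyapunov function. Since D-BP weighs $(\sk)$ by the delay differential $\Delta\hat{W}_\sk$, which in the fluid limit equals $\Delta q_\sk/\lambda_s$ rather than $\Delta q_\sk$, the unweighted $\sum_{(\sk)}q_\sk^2$ of Proposition~\ref{pro:qbp} would fail to make the scheduling term non-positive for heterogeneous rates. The rate-weighted function $\sum_{(\sk)}\lambda_s^{-1}q_\sk^2$ is precisely the object whose drift exposes D-BP's own weight sum while still dominating $\|\vec{q}\|$; once this is identified, the rest is the bookkeeping of Proposition~\ref{pro:qbp} on the horizon $t\ge T$ supplied by Lemmas~\ref{lem:pw} and~\ref{lem:init}.
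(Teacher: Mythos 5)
Your proposal matches the paper's (largely omitted) proof: the paper uses exactly the rate-weighted Lyapunov function $V(\vec{q}(t)) = \tfrac12\sum_{(\sk)}(q_\sk(t))^2/\lambda_s$, invokes Lemmas~\ref{lem:pw} and~\ref{lem:init} to get $q_\sk=\lambda_s\hat{w}_\sk$ after a finite time, and observes that the resulting drift exposes D-BP's own weight sum $\sum_{(\sk)}(\Delta q_\sk/\lambda_s)\pi_\sk = \sum_{(\sk)}\Delta\hat{w}_\sk\,\pi_\sk$. You have simply filled in the summation-by-parts bookkeeping, the sign argument via (\ref{eq:fluidwcon}), and the standard $\sqrt{V}$ upgrade to a uniform drift that the paper leaves implicit before citing Theorem~4.2 of Dai.
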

\begin{proof}
We show the stability using fluid limits and standard Lyapunov techniques. 
From Lemmas~\ref{lem:qw} and \ref{lem:init}, we obtain the key property
for proving throughput-optimality of D-BP in Eq.~(\ref{eq:qw}), i.e., 
after a finite time, there is a linear relation between queue lengths
and delays in the fluid limit model. We start with the following quadratic-form 
Lyapunov function, 
\begin{equation}
\textstyle V(\vec{q}(t)) \triangleq \frac 1 2 \sum_{(\sk) \in \Pair} 
\frac {\left(q_\sk(t)\right)^2} {\lambda_s}.
\end{equation}
Following the line of analysis in the proof for Proposition~\ref{pro:qbp}, 
we can show that for any $\zeta_1 > 0$, there exist $\zeta_2>0$ and a finite 
time $T>0$ such that $V(\vec{q}(t)) \ge \zeta_1$ implies $\frac{D^+}{dt^+} 
V(\vec{q}(t)) \le -\zeta_2$ for any regular time $t \ge T$, if the underlying 
scheduler maximizes $\sum_{s,k} \frac{\Delta q_{s,k}(t)}{\lambda_s} \cdot 
\pi_{s,k}(t)$. Then, by applying the linear relation (\ref{eq:qw}), we can 
see that D-BP indeed satisfies such a condition, and obtain the results. We 
omit the detailed proof since it mirrors the derivations in Proposition~\ref{pro:qbp}.
\end{proof}


\section{Greedy Algorithms} \label{sec:greedy}
It is well known that the schemes (e.g., Q-BP and D-BP) based on the back-pressure 
techniques are complex to implement because they involve computing a MaxWeight 
component, which in general is NP-hard \cite{sharma06}. Hence, although D-BP operates 
efficiently and achieves the optimal throughput region, it could be difficult to 
implement in practice. Therefore, we are interested in simpler approximations of 
D-BP that can achieve a guaranteed fraction of the optimal performance. The 
Delay-based Greedy Maximal Scheduling (D-GMS) algorithm is a good candidate 
approximation algorithm. A Greedy Maximal Scheduling (GMS) algorithm 
\cite{dimakis06,lin06, joo09b,leconte09} (which is also known as Longest Queue 
First (LQF)) operates (in the scenarios with single-hop traffic) as follows: at 
each time slot $t$, starts with an empty schedule; first picks a link $l$ with 
the maximum weight (e.g., queue length or delay); adds $l$ into the schedule, 
and disables other links that interfere with $l$; next picks a link $l^\prime$ with 
the maximum weight from the remaining set of links, adds $l^\prime$ into the schedule, 
and disables other links that interfere with $l^\prime$; and continues this process 
until all links are either chosen or disabled. All chosen links will be scheduled 
during time slot $t$. Note that any schedule obtained by GMS is maximal. 

GMS has been extensively studied due to its low complexity \cite{lin06}, distributed 
implementations \cite{hoepman04} (or distributed approximations \cite{joo11}) and 
empirically observed good performance \cite{joo09c}. It was first shown in \cite{dimakis06} 
that GMS is throughput-optimal in networks where the so-called \emph{local pooling} 
condition is satisfied. The authors of \cite{joo09,joo09b} generalize the idea of 
\emph{local pooling} to $\sigma$-\emph{local pooling}, where $\sigma$ is a topological 
notion depending on the underlying network topology and is called the \emph{local pooling
factor}. There, the authors show that GMS can achieve a $\sigma$-fraction of the optimal 
throughput region. On the other hand, in \cite{brzezinski08,zussman08}, the \emph{local 
pooling} condition is generalized to the scenarios with multihop traffic, i.e., GMS is 
throughput-optimal in networks where the \emph{multihop local-pooling} condition is 
satisfied. Next, we will discuss the performance limits of D-GMS.

\begin{algorithm}[t]
\caption{Greedy Maximal Scheduling (GMS) Algorithm} \label{alg:gms}
\begin{algorithmic}[1]
\Procedure{GMS}{$\Pair,x$}
\State $M \gets \emptyset$
\State $\Pair^\prime \gets \Pair$
\While{$\Pair^\prime \neq \emptyset$}
\State pick a link-flow-pair $(\sk)$ with maximum weight: $x(\sk)=\max_{(\rj) \in \Pair^\prime} x(\rj)$
\State $M \gets M \cup \{(\sk)\}$
\State $\Pair^\prime \gets \Pair^\prime \backslash I(\sk)$
\EndWhile
\EndProcedure
\end{algorithmic}
\end{algorithm}

%

To generalize the GMS algorithm to settings with multihop traffic, we consider 
link-flow-pairs. We let $x(\sk)$ denote the weight of link-flow-pair $(\sk) \in \Pair$, 
and conclude the procedure of GMS in Algorithm~\ref{alg:gms}. We then describe the 
operations of D-GMS and its queue-length-based counterpart (called Q-GMS) in the
following.

\noindent {\bf Delay-based Greedy Maximal Scheduling (D-GMS) Algorithm:}
At each time slot $t$, the algorithm sets the weight of each link-flow-pair
to the delay differential, i.e.,
\begin{equation}
\label{eq:dgms}
x(\sk) \gets \Delta \hat{W}_\sk(t), ~\text{for all}~ (\sk) \in \Pair,
\end{equation}
and finds its schedule in decreasing order of weight conforming to the
underlying interference constraints, by applying Algorithm~\ref{alg:gms}.

\noindent {\bf Queue-length-based Greedy Maximal Scheduling (Q-GMS) Algorithm:}
At each time slot $t$, the algorithm sets the weight of each link-flow-pair
to the queue-length differential, i.e.,
\begin{equation}
\label{eq:qgms}
x(\sk) \gets \Delta Q_\sk(t), ~\text{for all}~ (\sk) \in \Pair,
\end{equation}
and finds its schedule by applying Algorithm~\ref{alg:gms}.

We characterize the throughput performance of D-GMS in the following proposition. 

\begin{proposition}
The achievable throughput region of D-GMS is no smaller than that of Q-GMS.
\end{proposition}

We omit the proof here, since it follows the similar line of analysis for D-BP
to establish the linear relation between queue lengths and delays in the fluid
limits, and the result can then be obtained by applying the techniques used in 
\cite{brzezinski08,zussman08}.
%
%
%

\section{Numerical Results} \label{sec:simulation}
In this section, we first highlight the \emph{last packet problem} for the
queue-length-based back-pressure algorithm. The last packet problem implies 
that flows that lack packet arrivals at subsequent time may experience 
excessive delays under Q-BP, which is later confirmed in the simulations. 
Then, we compare throughput and delay performance of Q-BP and D-BP in a 
grid network topology under the \emph{2-hop} interference model. Finally, 
we compare throughput performance of Q-GMS and D-GMS in a size-6 ring 
network under the \emph{1-hop} interference model.

\begin{figure}[t]
\centering
\subfigure[``H"-type network topology]{\epsfig{file=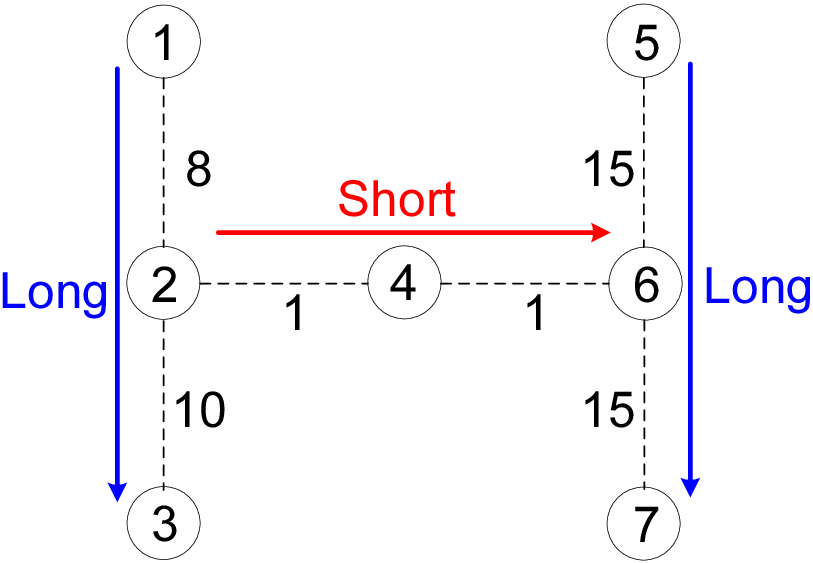,width=0.4\linewidth}\label{fig:lastpkt_a}} ~~~~~~~~~
\subfigure[HOL delay of short flow $(2 \rightarrow 4 \rightarrow 6)$ when $\lambda = 3$]{\epsfig{file=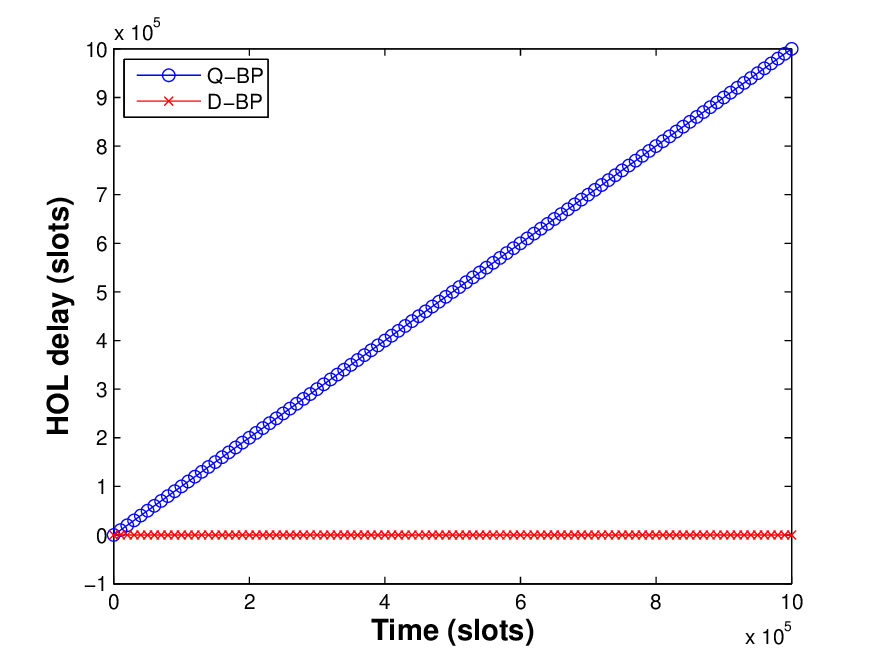,width=0.45\linewidth}\label{fig:lastpkt_b}}
\caption{Illustration of the last packet problem under Q-BP.}
\label{fig:lastpkt}
\end{figure}

We first show the last packet problem of Q-BP through simulations. 
We observe that several last packets of a short flow (that carry 
a finite amount of data) may get stuck, which could cause excessive 
\high{delays}. We consider a scenario consisting of 7 nodes and 6 links as 
shown in Fig.~\ref{fig:lastpkt_a}, where nodes are represented by 
circles and links are represented by dashed lines with their associated
link capacities\footnote{Unit of link capacity is packets per time slot.}. 
We assume a time-slotted system. We establish three flows: one short 
flow ($2 \rightarrow 4 \rightarrow 6$) and two long flows ($1 
\rightarrow 2 \rightarrow 3$) and ($5 \rightarrow 6 \rightarrow 7$). 
The short flow arrives in the network with 10 packets at time 0. The 
long flows have an infinite amount of data and keep injecting packets 
at the source nodes following Poisson distribution with mean rate 
$\lambda$ at each time slot. Numerical calculation shows that the 
feasible rate under the 2-hop interference should satisfy that $\lambda 
\le 4.44$. We conduct our simulation for $10^6$ time slots, and plot 
time traces of HOL delay of the short flow when $\lambda = 3$. 
Fig.~\ref{fig:lastpkt_b} illustrates the results that the delay increases 
linearly with time under Q-BP, which implies that several last packets 
of the short flow are excessively delayed. On the other hand, D-BP 
succeeds in serving the short flow and keeps the delay close to 0. This 
also implies that certain flows whose queue lengths do not increase 
due to lack of future arrivals (or whose inter-arrival times between 
groups of packets are very large) may experience a large delay under 
Q-BP, which will be confirmed in the following simulations. 

\begin{figure}[t]
\centering
\subfigure[Grid network topology]{\epsfig{file=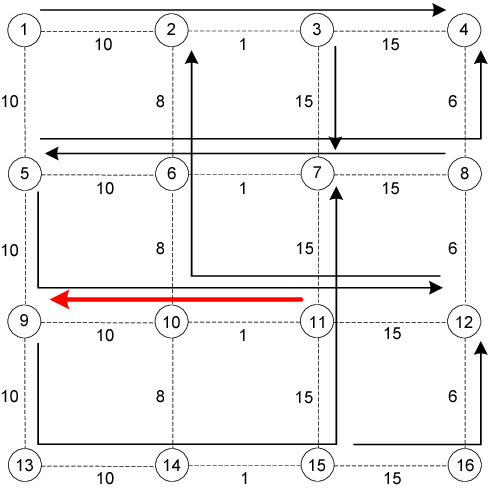,width=0.35\linewidth}\label{fig:multihop_a}} ~~~~~~~~~
\subfigure[Average queue length]{\epsfig{file=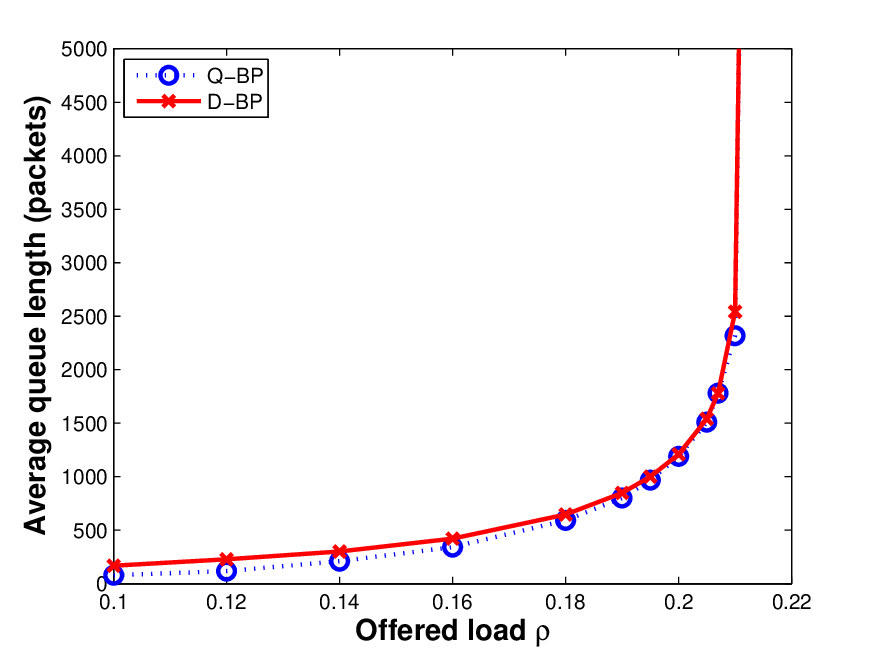,width=0.5\linewidth}\label{fig:multihop_b}}
\caption{Performance of scheduling algorithms for multihop traffic following Poisson distribution.}
\label{fig:multihop}
\end{figure}

Next, we evaluate the throughput performance of different schedulers in a grid 
network that consists of 16 nodes and 24 links as shown in Fig.~\ref{fig:multihop_a}, 
where nodes and links are represented by circles and dashed lines, respectively, 
with link capacity. We establish 9 multihop flows that are represented by arrows. 
Let $\lambda_1=0.1$ and $\lambda_2=1$. At each time slot, there is a file arrival 
with probability $p=0.01$ for flow ($11 \rightarrow 10 \rightarrow 9$) (represented 
by the red thick arrow in Fig.~\ref{fig:multihop_a}), and the file size follows 
Poisson distribution with mean rate\footnote{Note that given the network topology, 
it is hard to find the exact boundary of the optimal throughput region of 
scheduling policies in a closed form. Hence, we probe the boundary by scaling the 
amount of traffic. After we choose $\vlambda$, which determines the direction of 
traffic load vector, we run our simulations with traffic load $\rho \vlambda$ changing 
$\rho$, which scales the traffic loads.} $\rho \lambda_1/p$. Note that flow 
($11 \rightarrow 10 \rightarrow 9$) has bursty arrivals with a small mean rate (we 
simply call it the bursty flow in the following part). All the other 8 flows have 
packet arrivals following Poisson distribution with mean rate $\rho \lambda_2$ at 
each time slot. Although these flows share the same stochastic property with an 
identical mean arrival rate $\rho \lambda_2$, uniform patterns of traffic are avoided 
by carefully setting the link capacities and placing the flows with different number 
of hops in an asymmetric manner. 

We evaluate the scheduling performance by measuring average total queue lengths in 
the network over time. Fig.~\ref{fig:multihop_b} illustrates average queue lengths 
under different offered loads to examine the performance limits of scheduling schemes. 
Each result represents an average of 10 simulation runs with independent stochastic 
arrivals, where each run lasts for $10^6$ time slots. Since the optimal throughput 
region is defined as the set of arrival rates under which the queue lengths remain 
finite, we can consider the traffic load, under which the queue length increases 
rapidly, as the boundary of the optimal throughput region. Fig.~\ref{fig:multihop_b} 
shows that D-BP achieves the same throughput region as Q-BP, thus supporting the 
theoretical results on throughput performance. 

\begin{figure}[t]
\centering
\epsfig{file=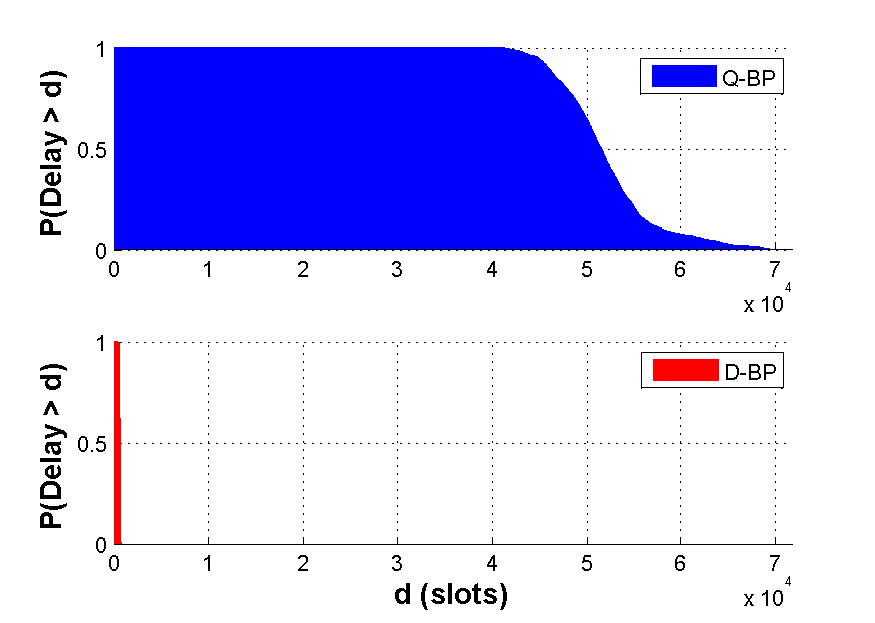,width=0.5\linewidth}
\caption{Delay distribution of the bursty flow under $\rho=0.2$.}
\label{fig:tail}
\end{figure}
\begin{figure}[t]
\centering
\epsfig{file=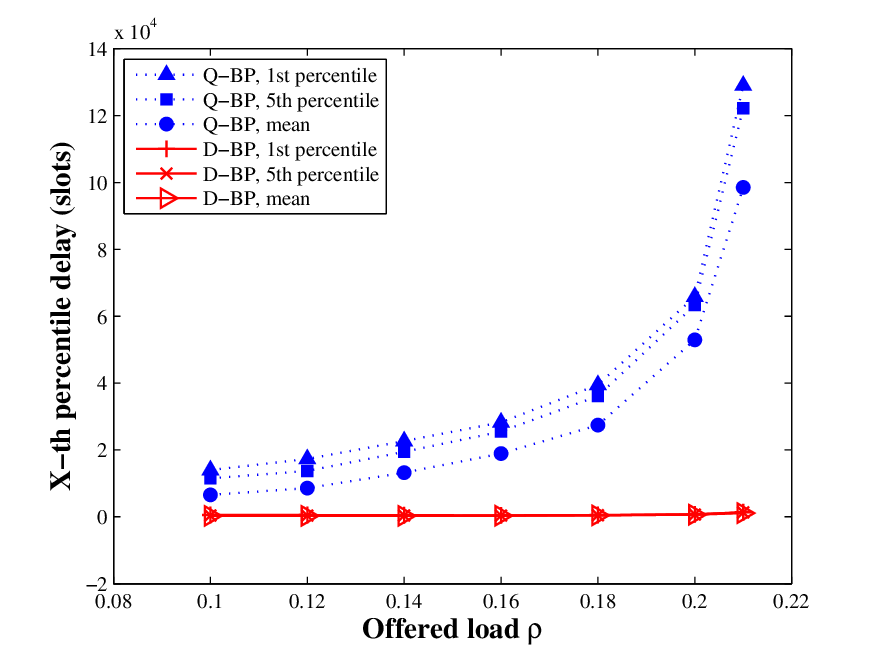,width=0.5\linewidth}
\caption{Mean delay, \high{the 1st and 5th percentile} delay of the bursty flow 
($11 \rightarrow 10 \rightarrow 9$) over offered loads.}
\label{fig:percentile}
\end{figure}

Although Q-BP and D-BP perform similarly in terms of the average queue length (or average 
delay due to Little's Law) over the network, the tail of the delay distribution of Q-BP 
could be substantially longer because certain flows are starved. This could cause enormous 
unfairness between flows, resulting in very poor QoS for certain flows. 

Note that although a bursty flow is a long flow that has an infinite amount of data, 
the arrivals occur in a dispersed manner (i.e., the inter-arrival times between groups 
of packets are very large) and we can view this bursty flow as consisting of many short 
flows. Thus, we expect that the bursty flow may experience a very large delay under Q-BP.
This is because the bursty flow lacks subsequent packet arrivals over long periods of 
time, which does not allow the queue-lengths to grow, and thus contributes to the long 
tail of the delay distribution.  However, this phenomenon may not manifest itself in 
terms of a higher average delay for Q-BP, as can be observed in Fig.~\ref{fig:multihop_b}, 
because the amount of data corresponding to the bursty flow in the simulation is small 
compared to the other flows. On the other hand, D-BP can achieve better fairness by 
scheduling the links based on delays and not starving bursty or variable flows. We 
confirm this in the following observations.

Fig.~\ref{fig:tail} illustrates the effectiveness of using D-BP over Q-BP in terms of how 
each scheme affects the delay distribution of bursty flows. We set $\rho=0.2$. The results
show that the tail of the delay distribution under D-BP vanishes much faster than Q-BP. 
Further, we plot the mean delay, \high{the 1st and 5th percentile} delay\footnote{Suppose 
there are $N$ packets sorted by their delays from the largest to the smallest, \high{the 
$X$-th percentile} delay is defined as the delay of the $\lfloor \frac{NX} {100} \rfloor$-th 
packet. If $\frac{NX} {100} \le 1$, it means the maximum delay. For example, if the delays 
are $[3,2,1,1,1]$, the 40th percentile delay is 2.} of the bursty flow over offered loads in 
Fig.~\ref{fig:percentile}. All these delays under D-BP are substantially less than under Q-BP, 
which implies that D-BP successfully eliminates the excessive packet delays. This confirms 
that, Q-BP causes a substantially long tail for the delay distribution of the network due 
to the starvation of the bursty flow, while D-BP overcomes this and achieves better fairness 
among the flows by scheduling the links based on delays.

\begin{figure}[t]
\centering
\subfigure[A size-6 ring network topology]{\epsfig{file=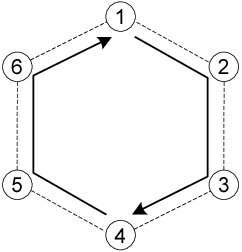,width=0.35\linewidth}\label{fig:c6_a}} ~~~~~~~~~
\subfigure[Average queue length]{\epsfig{file=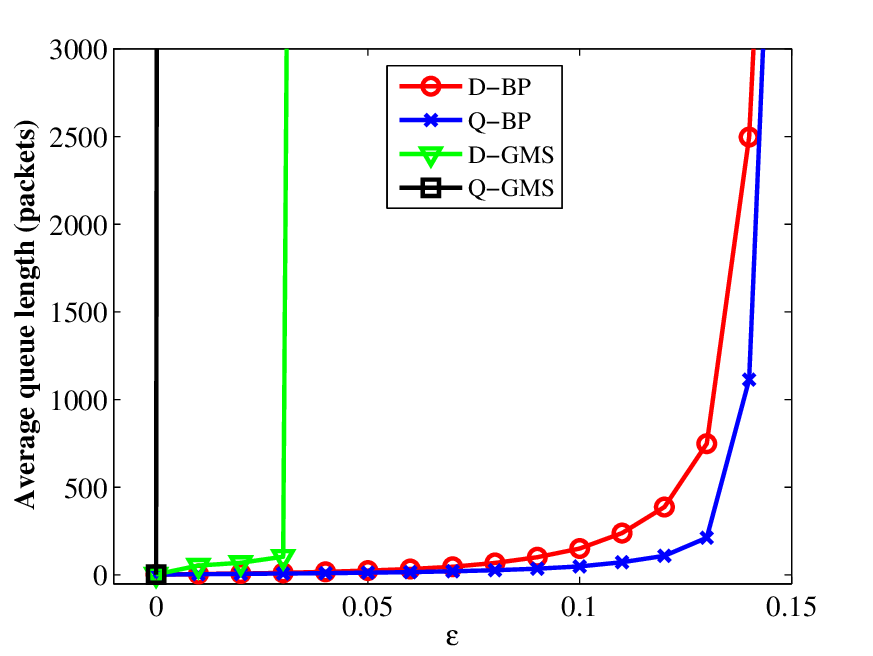,width=0.5\linewidth}\label{fig:c6_b}}
\caption{Performance comparison of Q-BP, D-BP, Q-GMS and D-GMS for multihop traffic under the 1-hop 
interference model.}
\label{fig:c6}
\end{figure}

Finally, we consider a size-6 ring network topology under the \emph{1-hop} interference 
model as shown in Fig.~\ref{fig:c6_a}, where links have unit link capacity. We simulate 
two flows: flow ($1 \rightarrow 2 \rightarrow 3 \rightarrow 4$) and flow ($4 \rightarrow 
5 \rightarrow 6 \rightarrow 1$). It is known \cite{joo09} that Q-GMS is not throughput-optimal
in this network, as the \emph{local pooling} condition is not satisfied (and thus the 
\emph{multihop local pooling} is not satisfied from Lemma 7 of \cite{zussman08}). On the 
other hand, although D-GMS is at least as efficient as Q-GMS, it is not known whether 
D-GMS can achieve larger throughput in certain scenarios, e.g., in the network in 
Fig.~\ref{fig:c6_a}. 

To see these, we construct a traffic pattern using the idea in \cite{joo09b}. 
We consider packet arrivals in a frame of 12 time slots. Two flows have the
same arrival pattern in each frame. We assume two arrival patterns for each
frame. Starting with empty queues at time slot 0, in each frame, the number 
of exogenous packet arrivals at the source of each flow (i.e., nodes 1 and 4) 
follows pattern $P_1=\{1,0,5,0,\,1,0,5,0,\,1,0,5,0\}$ with probability 
$\epsilon$, and pattern $P_2=\{1,0,0,\,1,0,0,\,1,0,0,\,1,0,0\}$ with 
probability $(1-\epsilon)$, where $0\le\epsilon\le1$. 
The average arrival rate vector is then $\vlambda = ( \frac {18} {12} \epsilon 
+ \frac {4} {12} (1-\epsilon))\textbf{e} = ( \frac {1} {3} + \frac {7} {6} \epsilon)
\textbf{e}$, where \textbf{e} is a dimension-2 vector with all components equal 
to 1. It is easy to check that $\vlambda$ lies strictly inside the optimal throughput 
region when $0 \le \epsilon < \frac {1} {7}$, while Q-GMS cannot stabilize the 
network under such a traffic pattern for all $\epsilon > 0$. Because under Q-GMS, 
when pattern $P_2$ occurs in a frame, all the packets arriving in this frame can 
be completely served and leave the network by the end of this frame, while pattern 
$P_1$ occurs, none of the packets arriving in this frame leaves the network \high{by} 
the end of this frame. We evaluate the performance 
of different scheduling policies under the above traffic pattern. For each policy 
under a fixed $\epsilon$, we take the average over 10 independent experiments, 
with each run being $10^7$ time slots. In Fig.~\ref{fig:c6_b}, we can see that 
Q-BP and D-BP have finite average queue length for $0 \le \epsilon < \frac 1 7 = 0.143$ 
and thus achieve the maximum throughput. On the other hand, the average queue 
length increases linearly with $\epsilon$ under Q-GMS and D-GMS starting from 
$\epsilon = 0$ and $\epsilon=0.04$, respectively. This implies that neither Q-GMS 
nor D-GMS is throughput-optimal in this setting, while D-GMS achieves larger throughput 
($\epsilon < 0.04$). To fully characterize the performance limits of D-GMS is 
an interesting yet challenging problem. 

\section{Conclusion} \label{sec:conclusion}
In this paper, we developed a throughput-optimal delay-based back-pressure scheduling
scheme for multihop wireless networks \high{with fixed routes}. We introduced a new 
delay metric suitable for multihop traffic and established a linear relation between 
queue lengths and delays in the fluid limits, which plays a key role in the performance 
analysis and proof of throughput-optimality. Delay-based schemes provide a simple way 
around the well-known last packet problem that plagues queue-based schedulers, and thus 
avoid flow starvation. As a result, the excessively long delays that could be experienced 
by certain flows under queue-based scheduling schemes 
are eliminated without any loss of throughput. 
\high{
Nonetheless, in this paper, we have only considered the scheduling problem with 
fixed routes, albeit with multihop flows. The question of whether delay-based 
schemes under dynamic routing can achieve throughput-optimality is still very much
open.}

\appendices

\section{Summary of notations} \label{app:notation}
\begin{table}[t]
\begin{tabularx}{\linewidth}{lX}
\hline
Symbol & Definition \\
\hline
$\Vertex$ & set of nodes \\
$\Edge$ & set of links \\
$\Flow$ & set of flows \\
$\Pair$ & set of link-flow-pairs \\
$\Matching_\Pair$ & set of feasible schedules \\
$Co(\Matching_\Pair)$ & convex hull of $\Matching_\Pair$ \\
$\Lambda^{*}$ & optimal throughput region \\ 
$\Hop(s)$ & \# of hops on the route of flow $s$ \\
$\Hop^{\max}$ & $\max_{s \in \Flow} \Hop(s)$ \\
$A_s(t)$ & \# of packet arrivals for flow $s$ at time slot $t$ \\
$\lambda_s$ & mean arrival rate for flow $s$ \\
$F_s(t)$ & cumulative \# of packet arrivals for flow $s$ up to time slot $t$ \\
$Q_\sk(t)$ & queue length of $Q_\sk$ at time slot $t$ \\
$\Pi_\sk(t)$ & service at $Q_\sk$  at time $t$ \\
$\Psi_\sk(t)$ & \# of packet departures at $Q_\sk$ at time slot $t$ \\
$P_\sk(t)$ & $\sum_{i=1}^k Q_{s,i}(t)$ \\
$\hat{F}_\sk(t)$ & cumulative \# of packets served at $Q_\sk$ up to time slot $t$ \\
$Z_{\sk,i}(t)$ & sojourn time (in the network) of the $i$-th packet of $Q_\sk$ at time slot $t$ \\
$W_\sk(t)$ & sojourn time (in the network) of the HOL packet of $Q_\sk$ at time slot $t$, i.e.,  $Z_{\sk,1}(t)$\\
$U_\sk(t)$ & time when the HOL packet of $Q_\sk$ arrives in the network, i.e., $t-W_\sk(t)$ \\
$\hat{W}_\sk(t)$ & $W_\sk(t) - W_{s,k-1}(t)$ \\
$\Delta \hat{W}_\sk(t)$ & $\hat{W}_\sk(t) - \hat{W}_{s,k+1}(t)$ \\
$\Delta Q_\sk(t)$ & $Q_\sk(t) - Q_{s,k+1}(t)$ \\
$B_\sk(t)$ & inter-arrival time (at the system or the source node) between the HOL packet of $Q_\sk$ and the packet that arrives immediately after it \\
\hline
\end{tabularx}
\caption{Summary of notations}
\label{tab:notation}
\end{table}

\section{Proof of Lemma~\ref{lem:init}} \label{app:lem:init}

\begin{proof}
We show that there exists a finite time $T>0$ such that the fluid limits 
satisfy $\hat{f}_{\sk}(T) > f_s(0)$ for all link-flow-pairs $(\sk) \in \Pair$.
We prove this by induction. We show that there exists a finite time $T$ with 
at least one link-flow-pair that satisfies the condition, and for a given set of 
link-flow-pairs satisfying the condition, at least one additional link-flow-pair 
will satisfy the condition by increasing $T$. 

We first fix an arbitrary $\epsilon_1 > 0$ and define a constant 
$K_1 \triangleq \max_s \Hop(s) + \left(\sum_s \lambda_s \Hop(s) \right) \epsilon_1$. 
In the fluid limit model, we will have
\[
f_s(\epsilon_1) = f_s(0) + \lambda_s \epsilon_1 > f_s(0),~\text{for all}~ s \in \Flow.
\]
Since queue lengths are no greater than the injected amount of data, we have that 
$p_{\sk}(\epsilon_1) \le f_s(\epsilon_1)$ for all $(\sk) \in \Pair$, and thus, 
\begin{equation}
\begin{split}
\label{eq:k1}
\textstyle \sum_{(\sk) \in \Pair} p_\sk(\epsilon_1) &\le \textstyle \sum_{(\sk) \in \Pair} f_s(\epsilon_1) \\
&\le \textstyle \sum_s \Hop(s) \left( f_s(0) + \lambda_s \epsilon_1 \right) \\
&\le K_1,
\end{split}
\end{equation}
where the last inequality is from Eq. (\ref{eq:initial}): $\sum_s f_s(0) \le 1$ and 
the definition of $K_1$. Now we show by induction that there exists a finite time 
$T$ such that
\[
\hat{f}_\sk(T) > f_s(0),~\text{for all link-flow-pairs}~(\sk).
\]

\noindent {\bf Base Case:}  
There exists $T_1>0$ such that for at least one link-flow-pair $(\sk)$,
\begin{equation}
\label{eq:base}
\hat{f}_\sk(T_1) \ge f_s(\epsilon_1).
\end{equation}

\high{
Let $T_1 \triangleq \epsilon_1 + K_1$. Suppose that (\ref{eq:base}) does not hold,
i.e., there exists at least one packet that arrives before time slot $\lfloor \xnj 
\epsilon_1 \rfloor + 1$ and is not served by the end of time slot $\lfloor \xnj T_1 
\rfloor$. Hence, at each time slot between $[\lfloor \xnj \epsilon_1 \rfloor + 1, 
\lfloor \xnj T_1 \rfloor ]$, there exists at least one schedule that has positive
summed weight. Therefore, the schedule determined by D-BP must serve at least one 
packet in the original system, otherwise the summed weight of the schedule (that 
does not serve any packet) is zero, which is not the maximum over all the feasible 
schedules. Hence, we must have
\[
\begin{split}
\textstyle \sum_{(\sk) \in \Pair} \left( \hat{F}^{(\xnj)}_\sk( \xnj T_1 ) - 
\hat{F}^{(\xnj)}_\sk( \xnj \epsilon_1 ) \right) \\
 \textstyle \ge \lfloor \xnj T_1 \rfloor - \lfloor \xnj \epsilon_1 \rfloor,
\end{split}
\]
Dividing both sides of the above inequality by $\xnj$ and letting $\xnj \rightarrow 
\infty$, we obtain
\[
\textstyle \sum_{(\sk) \in \Pair} \left( \hat{f}_\sk(T_1) - 
\hat{f}_\sk(\epsilon_1) \right) \ge K_1.
\]
}
Then, from (\ref{eq:k1}), we have
\[
\begin{split}
\textstyle \sum_{(\sk) \in \Pair} \hat{f}_\sk(T_1) &\ge 
\textstyle \sum_{(\sk) \in \Pair} \hat{f}_\sk(\epsilon_1) +
\sum_{(\sk) \in \Pair} p_\sk(\epsilon_1) \\
&= \textstyle \sum_{(\sk) \in \Pair} f_s(\epsilon_1). 
\end{split}
\]
Therefore, $\hat{f}_\sk(T_1) \ge f_s(\epsilon_1)$ for at least one link-flow-pair
$(\sk)$.

\noindent {\bf Inductive Step:} 
Suppose that there exist $T_l$ and a subset $S_l \subseteq \Pair$ such that for all 
$(\sk) \in S_l$, we have 
\begin{equation}
\label{eq:fixed}
\hat{f}_\sk(T_l) \ge f_s(\epsilon_1).
\end{equation}
Then there \high{exists} $T_{l+1} \ge T_l$, where $1 \le l < \sum_s \Hop(s)$, and a 
link-flow-pair $(\tilde{s}, \tilde{k}) \in \Pair \backslash S_l$ such that
\begin{equation}
\label{eq:induction}
\hat{f}_{\tilde{s}, \tilde{k}}(T_{l+1}) \ge f_{\tilde{s}}(\epsilon_1).
\end{equation}
Further we define $S_{l+1} = S_l \cup \{(\tilde{s}, \tilde{k})\}$. 

We prove the inductive step for $l=1$. The generalization for $l>1$ is straightforward.  
Hence, we show that for given $S_1$ and $T_1$, there exists a finite $T_2 > T_1$ such 
that (\ref{eq:induction}) with $T_2$ holds for at least two different link-flow-pairs.

Let $(\hsk)$ denote the link-flow-pair that satisfies (\ref{eq:fixed}) with $T_1$. 
Then, we have\footnote{Note that if $(s,k) \in S_l$, we must have $(s,k-1) \in S_l$. 
Hence, for $l=1$, we must have the first hop of a flow, i.e., $S_1 = {(\hat{s},1)}$ 
for some $\hat{s}$.} $S_1 = \{ (\hat{s},1) \}$ and can specify the set $S_1^*$ of 
link-flow-pairs $(s,k) \in \Pair \backslash S_1$ that is closest to the source of each 
flow from (\ref{eq:s1ss}). We illustrate the case that $\Hop(\hat{s}) > 1$, and the other 
case that $\Hop(\hat{s}) = 1$ can be easily shown following the same line of analysis. 
Now, we have 
\[
\hat{f}_{\hat{s},1}(t) \ge f_{\hat{s}}(\epsilon_1), ~\text{for all}~t \ge T_1.
\]
For all the other link-flow-pairs, we observe that
\begin{equation}
\label{eq:k1k1}
\textstyle \sum_{(\rj) \in \Pair \backslash S_1} \left( f_r(\epsilon_1) - 
\hat{f}_{\rj}(T_1) \right) \le K_1.
\end{equation}

Suppose that for all $t \ge T_1$, we have
\begin{equation}
\label{eq:assum}
\hat{f}_{\rj}(t) < f_r(\epsilon_1), ~\text{for all}~ (\rj) \in \Pair \backslash S_1.
\end{equation}
In the following part, we provide a choice of $T_2 > T_1$ such that assumption 
(\ref{eq:assum}) leads to a contradiction, which completes the inductive step, 
and then the lemma follows by induction.

We view each sample path $X^{(\xnj)}(t)$ after time slot $\lceil \xnj T_1 \rceil$ 
as a generalized system with link-flow-pairs in $S_1 = \{(\hat{s},1)\}$. We say 
that a time slot is \emph{unavailable} to $S_1$ when a packet from a link-flow-pair 
$(\rj)\in \Pair \backslash S_1$ is transmitted during the time slot. Let $h_{S_1}(t)$ 
denote the (scaled) amount of time unavailable to $S_1$ during the period of $(T_1, t]$ 
in the scaled system, for all $t > T_1$. For the scaled generalized system $S_1$, 
we obtain from (\ref{eq:k1k1}) and (\ref{eq:assum}) that 
\begin{equation}
\label{eq:unavailable}
\textstyle h_{S_1}(t) \le \sum_{(\rj) \in \Pair \backslash S_1} \left( \hat{f}_{\rj}(t) - 
\hat{f}_{\rj}(T_1) \right) \le K_1,
\end{equation}
for all $t > T_1$. Since the time unavailable to $S_1$ is bounded, 
as time $t$ increases, only link-flow-pairs in $S_1$ will be scheduled, 
which implies that the weight of link-flow-pairs of $\Pair \backslash 
S_1$ becomes negligible. This allows us to focus on $S_1$. Owing to 
Lemma~\ref{lem:qw} and the definition of $S_1$, the linear relation 
between queue lengths and delays holds for the link-flow-pair in $S_1$. 
Then, it can be easily shown following the same line of analysis of 
Proposition~\ref{pro:dbp} that link-flow-pairs in $S_1$ are stable 
under D-BP\footnote{Note that since Lemmas~\ref{lem:qw} and~\ref{lem:init} 
hold for the generalized system $S_1$, Proposition~\ref{pro:dbp} can 
be applied to $S_1$.}. Hence, for all $(\sk) \in S_1$, we have 
\begin{equation}
\label{eq:qb}
q_{\sk}(t) \le C_1, ~\text{for all}~ t\ge T_1, 
\end{equation}
and thus
\begin{equation}
\label{eq:wb}
\hat{w}_{\sk}(t) \le \frac {C_1} {\lambda_{s}},
~\text{for all}~ t\ge T_1, 
\end{equation}
for some constant $C_1$, which depends on $T_1$ and $K_1$ and does not depend on 
time $t$. 

Recall that $S_1^*$ denotes the set of link-flow-pairs that is closest to the source 
of each flow out of $S_1$ defined in (48). We choose $t$ large enough such that for 
all $(\sk) \in S_1$ and $(r^*,j^*) \in S_1^*$, 
\begin{equation}
\label{eq:c1}
\frac {C_1} {\lambda_{s}} - \left( t - \epsilon_1 
-  \frac {C_1} {\lambda_{s}} \right)< \left( t - \epsilon_1
- \frac {C_1} {\lambda_{r^*}} \right) - \epsilon_1.
\end{equation}
From (\ref{eq:assum}), there are packets that arrive at the source node by time $\epsilon_1$ 
and have not been served at $j$-th hop by time $t$ for all $(\rj) \in \Pair \backslash S_1$, 
we obtain that
\begin{equation}
\label{eq:c4}
t - \epsilon_1 \le w_{\rj}(t) \le t,
~\text{for all}~ (\rj) \in \Pair \backslash S_1.
\end{equation}
Since $(r^*,j^*), (r^*,j^*+1) \in P\backslash S_1$ for $(r^*,j^*) \in S_1^*$, we have
\begin{equation}
\label{eq:c3}
\hat{w}_{r^*,j^*+1}(t) = w_{r^*,j^*+1}(t) - w_{(r^*,j^*)}(t) \le \epsilon_1, 
\end{equation}
for all $(r^*,j^*) \in S^*_1$.
From (\ref{eq:wb}), (\ref{eq:c4}), and that $(r^*,j^*-1) \in S_1$, 
we have
\begin{equation}
\label{eq:c2}
\hat{w}_{r^*,j^*}(t) \ge t - \epsilon_1 - \frac {C_1} {\lambda_{r^*}},
\end{equation}
for all $(r^*,j^*) \in S^{*}_1$.
Then, we have
\[
\begin{split}
\Delta \hat{w}_{s,k}(t) &= \hat{w}_{s,k}(t) - \hat{w}_{s,k+1}(t) \\
&\stackrel{(a)}\le C_1 / \lambda_s - (t - \epsilon_1 - C_1 / \lambda_s) \\
&\stackrel{(b)}< (t - \epsilon_1 - C_1 / \lambda_{r^*}) - \epsilon_1 \\
&\stackrel{(c)}\le \hat{w}_{r^*,j^*}(t) - \hat{w}_{r^*,j^*+1}(t) \\
&= \Delta \hat{w}_{r^*,j^*} (t)
\end{split}
\]
for all $(s,k) \in S_1$ and $(r^*,j^*) \in S_1^*$, where (a) is from 
(\ref{eq:wb}) and (\ref{eq:c2}), (b) is from (\ref{eq:c1}), and (c) is
from (\ref{eq:c2}) and (\ref{eq:c3}). Hence, for large t, we have that 

\begin{equation}
\label{eq:skrj}
\Delta \hat{w}_{s,k}(t) < \min_{(r^*,j^*) \in S^*_1} 
\{ \Delta  \hat{w}_{r^*,j^*}(t) \}.
\end{equation}
Also, from (\ref{eq:c4}), we have that 
\begin{equation}
\label{eq:hatrj}
\Delta  \hat{w}_{\hat{r}, \hat{j}}(t) \le \epsilon_1,
\end{equation}
for all $(\hat{r}, \hat{j}) \in \Pair \backslash ( S_1 \cup S^*_1 )$.
Since (\ref{eq:hatrj}) holds for an arbitrarily small $\epsilon_1$ and 
from (\ref{eq:skrj}), D-BP favors link-flow-pairs of $S_1^*$ for all 
large $t$. Note that $\Delta\hat{w}_{s,k}(t)$ is bounded for $(s,k) \in S_1$ 
from (\ref{eq:wb}), and $\Delta\hat{w}_{\hat{r},\hat{j}}(t)$ is bounded 
for $(\hat{r},\hat{j}) \in P\backslash (S_1 \cup S_1^*)$ from (\ref{eq:hatrj}), 
and $\Delta\hat{w}_{r*,j*}(t)$ increases linearly on the order of $t$ for 
$(r^*,j^*) \in S_1^*$ from (\ref{eq:c2}). Hence, there exists a large 
$T_2^{\prime}$ such that for all $t > T_2^{\prime}$, link-flow-pairs in 
$S_1^*$ will be scheduled at all the time slots between $[\lfloor \xnj 
T_2^{\prime} \rfloor + 1, \lfloor \xnj t \rfloor]$ under D-BP. Then, we 
can choose $T_2 > T_2^{\prime}$ and have that 
\[
h_{S_1}(T_2) \ge T_2 - T_2^{\prime} > K_1.
\]
However, this contradicts \high{with} (\ref{eq:unavailable}), which shows that,
the assumption (\ref{eq:assum}) is false, and there exists a large $T_2$ 
such that 
\begin{equation}
\label{eq:contradiction}
\hat{f}_{\tilde{s}, \tilde{k}}(T_2) \ge f_{\tilde{s}}(\epsilon_1), 
~\text{for at least one}~ (\tilde{s}, \tilde{k}) \in \Pair \backslash S_1.
\end{equation}

In fact, our choice of $T_2$ depends on the set $S_1$. However, since there 
are only a finite number of flows, we can always choose a large enough $T_2$ 
so that (\ref{eq:contradiction}) holds for some $(\tilde{s}, \tilde{k}) \in 
\Pair \backslash S_1$.
\end{proof}

\section{Lemma~\ref{lem:pd}} \label{app:lem:pd}
\begin{lemma}
\label{lem:pd}
Consider a system under the Q-BP policy. Then for $\vlambda$ strictly inside 
$\Lambda^{*}$, there exists a finite time $T > 0$ such that the fluid limits 
satisfy the following property for all $t \ge T$ with probability one:
\begin{equation}
\label{eq:lem:dqg}
q_\sk(t) \ge q_{s,k+1}(t), ~\text{i.e.},~\Delta q_\sk(t) \ge 0,
\end{equation}
for all $(\sk) \in \Pair$.
\end{lemma}

\begin{proof}
We let $\Pair_{\sk} \triangleq \{ (s,j)~|~1 \le j \le k\}$ denote the
set of link-flow-pairs among the first $k$ hops of flow $s$. Consider 
a flow $\hat{s} \in \Flow$. We want to show that there exists a finite 
time $T > 0$ such that for all time $t \ge T$, (\ref{eq:lem:dqg}) holds 
for every link-flow-pair $(\hat{s},k) \in \Pair_{\hat{s},\Hop(\hat{s})}$. 
We prove it by induction.

\noindent {\bf Base Case:}  
We first show that there exists a finite time $T_{\hat{s},1}>0$ such that 
(\ref{eq:lem:dqg}) holds for $(\hat{s},1)$ and for any $t \ge T_{\hat{s},1}$.
Suppose that (\ref{eq:lem:dqg}) does not hold for $(\hat{s},1)$ and for all
$t \ge 0$. Then, Q-BP does not schedule link-flow-pair $(\hat{s},1)$ due to 
the operation of Q-BP that it does not schedule any link-flow-pair $(\sk)$ 
with $\Delta Q_\sk(t) < 0$. On the other hand, due to the exogenous arrivals
at the source node of flow $\hat{s}$, $Q_{\hat{s},1}(t)$ must increase with 
time. Specifically, let $T_{\hat{s},1} \triangleq 1/\lambda_{\tilde{s}}$, 
then we have $q_{\tilde{s},1}(T_{\hat{s},1}) = q_{\tilde{s},1}(0) + 
\lambda_{\tilde{s}} T_{\hat{s},1} \ge 1$. Since Q-BP does not schedule 
link-flow-pair $(\hat{s},1)$, then it satisfies that $q_{\tilde{s},2}(T_{\hat{s},1}) 
\le q_{\tilde{s},2}(0) \le 1$ from (\ref{eq:initial}). Hence, $\Delta 
q_{\tilde{s},1}(T_{\hat{s},1}) = q_{\tilde{s},1}(T_{\hat{s},1}) - 
q_{\tilde{s},2}(T_{\hat{s},1}) \ge 0$, i.e., (\ref{eq:lem:dqg}) holds for 
link-flow-pair $(\tilde{s},1)$ at time $T_{\hat{s},1}$. We next show that 
(\ref{eq:lem:dqg}) also holds for all $t \ge T_{\hat{s},1}$ for link-flow-pair 
$(\tilde{s},1)$ under Q-BP. Suppose that $t^* > T_{\hat{s},1}$ is the first
time after $T_{\hat{s},1}$ such that $\Delta q_{\tilde{s},1}(t^*) < 0$ occurs.
Consider a positive sequence $\{\xnj\}$ for which the convergence to the 
fluid limits holds. Then $Q_{\tilde{s},1}$ is scheduled at some time slots in the 
interval of $[ \lfloor \xnj T_{\hat{s},1} \rfloor + 1, \lfloor \xnj t \rfloor]$
in the original system. Let $\tau^*$ be the first such time slot in the interval 
of $[ \lfloor \xnj T_{\hat{s},1} \rfloor + 1, \lfloor \xnj t \rfloor]$ when 
$Q_{\tilde{s},1}$ is scheduled in the original system. Hence, we have 
$Q_{\tilde{s},1}(\tau^*) \ge Q_{\tilde{s},2}(\tau^*)$, otherwise it is not scheduled. 
This further implies that $Q_{\tilde{s},1}(\tau) \ge Q_{\tilde{s},2}(\tau) - 2$ 
for any time slot $\tau \ge \tau^*$, following a similar argument for showing 
(\ref{eq:qdcon}). Therefore, we must have $\Delta q_{\tilde{s},1}(t^*) \ge 0$ 
from the convergence of (\ref{eq:fluid_q}), which leads to a contradiction. 
Thus, (\ref{eq:lem:dqg}) holds for any $t \ge T_{\hat{s},1}$ for link-flow-pair 
$(\tilde{s},1)$ under Q-BP.

\noindent {\bf Inductive Step:} 
Suppose that there exists a finite time $T_{\hat{s},k}>0$ such that (\ref{eq:lem:dqg}) 
holds for all $(s,k) \in \Pair_{\hat{s},k}$ and for all $t \ge T_{\hat{s},k}$,
where $1 \le k < \Hop(\hat{s})$, we want to show that there exists a finite 
time $T_{\hat{s},k+1} \ge T_{\hat{s},k}$ such that (\ref{eq:lem:dqg}) holds 
for all $(s,k) \in \Pair_{\hat{s},k+1}$ and for all $t \ge T_{\hat{s},k+1}$. 
Clearly, it is sufficient to show that (\ref{eq:lem:dqg}) holds for 
$(\hat{s},k+1)$ for all $t \ge T_{\hat{s},k+1}$.

Let $\Pair_0(\tau)$ denote the set of link-flow-pairs such that (\ref{eq:lem:dqg}) 
holds for all $t \ge \tau$. Clearly, we have $\Pair_{\hat{s},k} \subseteq 
\Pair_0(T_{\hat{s},k})$. Suppose $\Pair_0(T_{\hat{s},k}) = \Pair_0(t)$ for all 
$t \ge T_{\hat{s},k}$, i.e., the set of link-flow-pairs for which (\ref{eq:lem:dqg}) 
holds does not change after time $T_{\hat{s},k}$. Then, Q-BP will schedule only 
link-flow-pairs in set $\Pair_0(T_{\hat{s},k})$ for all time slot $t \ge \lfloor 
\xnj T_{\hat{s},k} \rfloor + 1$ in the original system. This implies that the 
fluid limit model of the subsystem that consists of link-flow-pairs in 
$\Pair_0(T_{\hat{s},k})$ must be stable for any $\vlambda$ strictly inside $\Lambda^*$, 
from throughput-optimality of Q-BP (See Proposition~\ref{pro:qbp}). Specifically, 
we can show that for any fixed $\zeta>0$, there exists a $T^{\prime} \ge T_{\hat{s},k}$ 
such that $\sum_{(\sk) \in \Pair_0(T_{\hat{s},k})} q_\sk(t) \le \zeta$ for all $t \ge 
T^{\prime}$. We now consider two cases:
\begin{enumerate}
\item there is no link-flow-pair in set $\Pair \backslash \Pair_0(T_{\hat{s},k})$ 
that becomes satisfying (\ref{eq:lem:dqg}) by time $\max \{T^{\prime}, (1+\zeta)/\lambda_{\hat{s}}\}$;
\item there exists at least one link-flow-pair in set $\Pair \backslash \Pair_0(T_{\hat{s},k})$
that becomes satisfying (\ref{eq:lem:dqg}) by time $\max \{T^{\prime}, (1+\zeta)/\lambda_{\hat{s}}\}$.
\end{enumerate}

\noindent In {\bf Case 1)}, choose $T_{\hat{s},k+1} \triangleq \max \{T^{\prime}, 
(1+\zeta)/\lambda_{\hat{s}}\}$. Then,
\[
\begin{split}
q_{\hat{s},k+1}(T_{\hat{s},k+1}) &\ge f_{\hat{s}}(T_{\hat{s},k+1}) - \sum_{(\sk) \in \Pair_{\sk}} q_\sk(T_{\hat{s},k+1}) \\
&\ge f_{\hat{s}}(T_{\hat{s},k+1}) - \sum_{(\sk) \in \Pair_0(T_{\hat{s},k})} q_\sk(T_{\hat{s},k+1}) \\
&\ge  T_{\hat{s},k+1} \lambda_{\hat{s}} - \zeta \\
&\ge 1.
\end{split}
\]
Since Q-BP does not schedule link-flow-pair $(\hat{s},k+1)$ by time $T_{\hat{s},k+1}$, 
then it satisfies $q_{\tilde{s},k+2}(T_{\hat{s},k+1}) \le q_{\tilde{s},k+2}(0) \le 1$ 
from (\ref{eq:initial}). Hence, $\Delta q_{\tilde{s},k+1}(T_{\hat{s},k+1}) = 
q_{\tilde{s},k+1}(T_{\hat{s},k+1}) - q_{\tilde{s},k+2}(T_{\hat{s},1}) \ge 0$, i.e., 
(\ref{eq:lem:dqg}) holds for link-flow-pair $(\tilde{s},k+1)$ at time $T_{\hat{s},k+1}$.
Similar as in the base case, we can show that (\ref{eq:lem:dqg}) holds for any 
$t \ge T_{\hat{s},k+1}$ for link-flow-pair $(\tilde{s},k+1)$ under Q-BP.

\noindent In {\bf Case 2)}, let $T_0$ be the first time after $T_{\hat{s},k}$ when there 
is a link-flow-pair $(\tilde{s},\tilde{k}) \in \Pair \backslash \Pair_0(T_{\hat{s},k})$
such that (\ref{eq:lem:dqg}) holds for $(\tilde{s},\tilde{k})$ at time $T_0$. Suppose 
$\Pair_0(T_0) = \Pair_0(t)$ for all $t \ge T_0$, i.e., the set of link-flow-pairs for 
which (\ref{eq:lem:dqg}) holds does not change after time $T_0$. Then similarly, we can 
show that there exists $T^{\prime \prime} \ge T_0$ such that $\sum_{(\sk) \in \Pair_0(T_0)} 
q_\sk(t) \le \zeta$ for any $t \ge T^{\prime \prime}$. Again, we consider two cases:
\begin{enumerate}[i)]
\item there is no link-flow-pair in the set of $\Pair \backslash \Pair_0(T_0)$ 
that becomes satisfying (\ref{eq:lem:dqg}) by time $\max \{T^{\prime \prime}, (1+\zeta)/\lambda_{\hat{s}}\}$;
\item there exists at least one link-flow-pair in the set of $\Pair \backslash \Pair_0(T_0)$
that becomes satisfying (\ref{eq:lem:dqg}) by time $\max \{T^{\prime \prime}, (1+\zeta)/\lambda_{\hat{s}}\}$.
\end{enumerate}
In {\bf Case 2-i)}, we choose $T_{\hat{s},k+1} \triangleq \max \{T^{\prime \prime}, 
(1+\zeta)/\lambda_{\hat{s}}\}$. Following a similar argument in Case 1), we show 
that (\ref{eq:lem:dqg}) holds for all $t \ge T_{\hat{s},k+1}$ for link-flow-pair 
$(\tilde{s},k+1)$ under Q-BP. Since there are finite number of link-flow-pairs in 
the system, in {\bf Case 2-ii)}, recursively applying the above argument , we show 
that there exists a finite time $T_{\hat{s},k+1}$ such that (\ref{eq:lem:dqg}) 
holds for all $t \ge T_{\hat{s},k+1}$ for link-flow-pair $(\tilde{s},k+1)$ under Q-BP.

Choose $T_{\hat{s}} \triangleq \max_{1 \le k \le \Hop(\hat{s})} T_{\hat{s},k}$, 
then (\ref{eq:lem:dqg}) holds for all link-flow-pairs $(\sk) \in \Pair_{\hat{s},
\Hop(\hat{s})}$, for all time $t \ge T_{\hat{s}}$. 

Note that the above argument applies to any $\hat{s} \in \Flow$. Choose 
$T \triangleq \max_{s \in \Flow} T_s > 0$. Therefore, (\ref{eq:lem:dqg}) 
holds for all link-flow-pairs of $\Pair$ for all time $t \ge T$.
\end{proof}

\bibliographystyle{IEEEtran}
\bibliography{jibo}


\end{document}